\newtheorem {theorem}{Theorem}
\newtheorem {lemma}[theorem]{Lemma}
\newtheorem {proposition}[theorem]{Proposition}
\newtheorem {fact}[theorem]{Fact}
\newcommand{\F}{\mathbb{F}_2}
\newcommand{\N}{\mathbb{N}}
\newcommand{\Z}{\mathbb{Z}}
\newcommand{\powerset}[1] {\mathcal{P}(#1)}
\newcommand{\mylog}{\mathit{log}}
\newcommand{\set}[1]{\{#1\}}
\newcommand{\Set}[2]{\{#1\ \mid\, #2\}}
\newcommand{\power}[1]{|#1|}
\newcommand{\factorize}[2]{#1/_{#2}}
\newcommand{\charfun}[1]{\chi_{#1}}
\newcommand{\witcountprob}{\ComplexityFont{Witness\ Counting}}
\newcommand{\aset}{S}
\newcommand{\setvec}{V}
\newcommand{\sizesetvec}{m}
\newcommand{\adim}{d}
\newcommand{\anum}{k}
\newcommand{\allvec}{\F^{\adim}}
\newcommand{\avec}{v}
\newcommand{\avecvec}{\vec v}
\newcommand{\avecvecp}{\vec u}
\newcommand{\avecvecpp}{\vec w}
\newcommand{\tvec}{t}
\newcommand{\ops}{\mathit{ops}}
\newcommand{\witof}[3]{\mathit{Wit}(#1, #2, #3)}
\newcommand{\candof}[3]{\mathit{Cand}(#1, #2, #3)}
\newcommand{\failof}[3]{\mathit{Fail}(#1, #2, #3)}
\newcommand{\witgiven}{\witof{\setvec}{\tvec}{\anum}}
\newcommand{\candgiven}{\candof{\setvec}{\tvec}{\anum}}
\newcommand{\failgiven}{\failof{\setvec}{\tvec}{\anum}}
\newcommand{\hadmatrix}{H}
\newcommand{\WHT}{\mathit{WHT}}
\newcommand{\shape}{\mathit{usage}}
\newcommand{\shapeinv}{\shape^{-1}}
\newcommand{\shapeof}[1]{\shape(#1)}
\newcommand{\shapeinvof}[1]{\shapeinv(#1)}
\newcommand{\inst}{\mathit{inst}}
\newcommand{\instof}[1]{\inst(#1)}
\newcommand{\setinst}{\mathit{Inst}}
\newcommand{\setinstof}[1]{\setinst(#1)}
\newcommand{\aupart}{\mathit{up}}
\newcommand{\upartof}[1]{\mathit{UPart}(#1)}
\newcommand{\vecequivof}[1]{\equiv_{#1}}
\newcommand{\parityfun}{\mathit{paritycount}}
\newcommand{\parityfunof}[1]{\parityfun(#1)}
\newcommand{\parityfuninv}{\parityfun^{-1}}
\newcommand{\parityfuninvof}[1]{\parityfuninv(#1)}
\newcommand{\oddcount}{\mathit{o}}
\newcommand{\evencount}{\mathit{e}}
\newcommand{\evenof}[1]{\mathit{evencount}(#1)}
\newcommand{\oddof}[1]{\mathit{oddcount}(#1)}
\newcommand{\graph}{\mathit{G}}
\newcommand{\univ}{\mathit{U}}
\newcommand{\edges}{\mathit{E}}
\newcommand{\per}{\mathit{P}}
\newcommand{\perfect}[1]{\mathit{Perf}(#1)}
\newcommand{\alg}{\mathcal{A}}
\newcommand{\bigOS}[1]{\mathcal{O}^*(#1)}
\newcommand{\bigO}[1]{\mathcal{O}(#1)}
\newcommand{\smallo}[1]{{o}(#1)}
\newcommand{\ETH}{\ComplexityFont{ETH}}
\newcommand{\multtime}[1]{\mathit{mtime}(#1)}
\newcommand{\parcounttab}[3]{\mathit{pctab}_{#1}(#2, #3)}
\newcommand{\steiner}{\ComplexityFont{Steiner~Tree}}
\newcommand{\domset}{\ComplexityFont{Dominating~Set}}
\newcommand{\bandwidth}{\ComplexityFont{Bandwidth}}
\newcommand{\mld}{\ComplexityFont{Maximum~Likelihood~Decoding}}
\newcommand{\weightdistr}{\ComplexityFont{Weight~Distribution}}
\newcommand{\mindist}{\ComplexityFont{Minimum~Distance}}
\newcommand{\syndrome}{\ComplexityFont{Syndrome~Decoding}}
\newcommand{\kpath}[1]{#1\text{-}\ComplexityFont{Path}}
\newcommand{\subsum}{\ComplexityFont{Subset~Sum}}
\newcommand{\perfmatch}[1]{\ComplexityFont{Perfect~#1\text{-}Matching}}
\newcommand{\conpoly}{\ComplexityFont{coNP}/\poly}
\newcommand {\problemtitle}[1]{\gdef\@problemtitle{#1}}
\newcommand {\problemshort}[1]{\gdef\@problemshort{#1}}
\newcommand {\probleminput}[1]{\gdef\@probleminput{#1}}
\newcommand {\problemparameter}[1]{\gdef\@problemparameter{#1}}
\newcommand {\problemquestion}[1]{\gdef\@problemquestion{#1}}
	\par\addvspace{.5\baselineskip}
			\normalsize \textbf{Input:} & \normalsize \@probleminput \\
			\normalsize \textbf{Task:} & \normalsize \@problemquestion
	\par\addvspace{.5\baselineskip}
\begin{document}

	\title{Fast Witness Counting}
	\author[1]{Peter Chini}
	\author[2]{Rehab Massoud}
	\author[1]{Roland Meyer}
	\author[1]{Prakash Saivasan}
	\affil[1]{TU Braunschweig, \texttt{\{p.chini, roland.meyer, p.saivasan\}@tu-bs.de}}
	\affil[2]{University of Bremen, \texttt{massoud@uni-bremen.de}}
	\date{}

	\maketitle
	\vspace{-1.25cm}
	
	\begin{abstract}
	We study the witness-counting problem: given a set of vectors $\setvec$ in the $\adim$-dimensional vector space over $\F$, a target vector $\tvec$, and an integer $\anum$, count all ways to sum-up exactly $\anum$ different vectors from $\setvec$ to reach $\tvec$.
	The problem is well-known in coding theory and received considerable attention in complexity theory.
	Recently, it appeared in the context of hardware monitoring.
	
	Our contribution is an algorithm for witness counting that is optimal in the sense of fine-grained complexity. 
	It runs in time $\bigOS{2^{\adim}}$ with only a logarithmic dependence on $\sizesetvec=\power{\setvec}$. 
	The algorithm makes use of the Walsh-Hadamard transform to compute convolutions over $\F^d$. 
	The transform, however, overcounts the solutions.
	Inspired by the inclusion-exclusion principle, we introduce correction terms.
	The correction leads to a recurrence that we show how to solve efficiently.
	The correction terms are obtained from equivalence relations over $\F^d$.
	
	We complement our upper bound with two lower bounds on the problem.
	The first relies on $\# \ETH$ and prohibits an $2^{o(\adim)}$-time algorithm.
	The second bound states the non-existence of a polynomial kernel for the decision version of the problem.
\end{abstract}
	
	\section{Introduction}
\label{Section:Introduction}

	We address the \emph{witness-counting problem} (\witcountprob) defined as follows.
	Given a finite set of $\adim$-dimensional vectors $\setvec\subseteq\allvec$ over the field of characteristic two, a target vector $\tvec\in\allvec$, and a number $\anum\in\N$, determine $\power{\witgiven}$ with
	\begin{align*}
		\witgiven = \Set{\avecvec=(\avec_1,\ldots, \avec_{\anum})\in\setvec^{\anum}}{\Sigma \avec_i = \tvec, \avec_i\neq \avec_j\text{ for all } i\neq j}.
	\end{align*}
	The set consists of so-called \emph{witnesses}, $\anum$-tuples of pairwise-distinct vectors in $\setvec$ that sum up to $\tvec$.
	
	\witcountprob\ generalizes well-known algorithmic problems from coding theory.
	Prominent examples of such are $\mld$, $\syndrome$, $\mindist$, and $\weightdistr$.
	These problems arise from a decoding task.
	Consider a word received from a noisy channel.
	Due to the noise, the word may contain errors and differ from the codeword that was actually sent. 
	As a receiver, we are interested in the actual codeword, and it is our task to reconstruct it. 
	Usually, the number of errors in the received word is bounded by a measure called the \emph{Hamming weight}.
	Hence, we need to decide the existence of a codeword close enough to the received word wrt. the given Hamming weight.
	
	All four problems mentioned above are variants of $\witcountprob$, where the target vector is zero or solutions are allowed to contain at most $k$ vectors.
	The Hamming weight is always modeled by the parameter $k$.
	These problems are algorithmically hard.
	A series of papers \cite{Berlekamp1978,Bruck1990,Vardy1997,Downey1999,Vardy2005} studies their complexity and shows results ranging from $\NP$-completeness of all problems to $\W[1]$-hardness of $\mld$ and $\weightdistr$.
	Vardy provides a survey~\cite{VardySurvey1997}.
	
	From an algorithmic perspective, much effort was invested into finding randomized procedures for decoding. 
	One of the first decoding algorithms was introduced by Prange in \cite{Prange1962}.
	The author 
	\begin{wraptable}{r}{5.5cm}
		\vspace{-0.2cm}
		\centering
		\begin{tabular}{ |m{2.2cm} | m{1.8cm}|}
			\hline
			\scriptsize{Authors} & \scriptsize{Runtime} \\ 
			\hline 
			\hline
			\scriptsize{Prange \cite{Prange1962}} & \scriptsize{$\bigOS{2^{1/17 \sizesetvec}}$} \\
			\hline
			\scriptsize{Stern \cite{Stern1988}} & \scriptsize{$\bigOS{2^{1/18 \sizesetvec}}$} \\
			\hline
			\scriptsize{May et al. \cite{May2011}} & \scriptsize{$\bigOS{2^{1/19 \sizesetvec}}$} \\
			\hline
			\scriptsize{Becker et al. \cite{Becker2012}} & \scriptsize{$\bigOS{2^{1/20 \sizesetvec}}$} \\
			\hline
		\end{tabular}
	\end{wraptable}
	proposes a technique called \emph{Information Set Decoding} (ISD).
	It uses linear algebra (random permutations) to reduce the search space of potential codewords.
	Since then, ISD was combined with other search techniques,
	prominently the representation technique for $\subsum$ from \cite{Joux2010}. 
	This led to improved runtimes, an overview of which is given in the table to the right. 
	To be precise, the table refers to the decision version of $\witcountprob$, checking whether $\power{\witgiven}>0$. 
	Due to the randomization, these algorithms are not suitable for witness counting.
	Moreover, all runtimes depend exponentially on $\sizesetvec$.
	This means they are intractable on instances where the set of vectors $\setvec$ and hence $\sizesetvec$ tends to be large.

	Such instances arise in the context of a new logging procedure in hardware monitoring~\cite{Massoud2017}.
	There, a signal is traced on an interval of $\sizesetvec$ clock cycles.
	Each clock cycle is assigned a bitvector in $\F^\adim$ uniquely identifying it.
	The vectors get collected in the set $\setvec$.
	Roughly, the logging procedure adds up all vectors of clock cycles, where the traced signal flips from $0$ to $1$ or vice versa.
	The result is the target vector $\tvec$. 
	Moreover, the procedure records the precise number $\anum$ of changes.
	In this setting, a witness is a reconstruction of the traced signal.
	The characteristic of the problem is that the size of $\setvec$ is typically large while $\anum$ is small.
	This is due to the fact that the interval of clock cycles is comparably long, while a change in the signal only happens rarely. 

	The logging mechanism is used in failure analysis. 
	Once a failure occurred, the value of the logged target vector $\tvec$ gets stored. 
	Subsequently, all traces leading to the vector $\tvec$ have to be reconstructed. 
	This is achieved by a precise satisfiability-modulo-theories analysis or a simulation in hardware. 
	The expensive step in this analysis is to ensure completeness:
	finding (roughly) all witnesses without knowing them. 
	Providing the number of witnesses (in an approximate model), $\witcountprob$ allows us to judge the (degree of) completeness of the reconstruction. 
	
	Hence, there is a need for an algorithm that determines the number of witnesses and has a low dependence on $\sizesetvec$.
	A simple approach is a dynamic programming that assumes an order on the set $\setvec = \set{v_1, \dots, v_\sizesetvec}$ with $v_1 < \dots < v_\sizesetvec$.
	We compute a table $C[\avec, i, w]$ for $\avec \in \F^\adim$, $i \in [1,k]$, and $w \in \setvec$.
	An entry $C[\avec, i, w]$ is the number of ways to write $\avec$ as a sum of $i$ different vectors from $\setvec$, where the vectors are increasingly ordered and the last one seen is $w$.
	We have that $\power{\witgiven} = \anum! \cdot \sum_{v \in \setvec} C[\tvec, \anum, \avec]$.
	The entries can be computed by the recurrence
	\mbox{$C[\avec, i, w] = \sum_{w' < w, w' \in \setvec} C[v-w', i-1, w']$.}
	In total, the table has $2^\adim \anum \sizesetvec$ many entries.
	Computing an entry takes $\sizesetvec$ additions.
	Hence, the needed arithmetic operations to fill the table are $\bigOS{2^\adim \sizesetvec^2}$. 
	Suppose the set $\setvec$ is large, $m = \bigO{2^\adim}$.
	Then the dynamic programming roughly takes $\bigOS{2^{3\adim}}$ operations.
	Hence, even the quadratic dependence on $\sizesetvec$ in the number of operations is prohibitive.
	
	Our main result is an algorithm solving \witcountprob\ in $\bigOS{2^{\adim}}$ arithmetic operations.
	Surprisingly, the size $\sizesetvec$ of the underlying set of vectors
	does not contribute to the number of arithmetic operations at all.
	It only appears as a logarithmic factor in the runtime.
	A similar phenomenon appeared before in the context of counting covers and partitions via inclusion-exclusion~\cite{Bjorklund2009}.
	Our algorithm can also be applied to solve $\mld$, $\syndrome$, $\weightdistr$, and $\mindist$. 
	In particular, when solutions are allowed to contain less than $k$ vectors, it is sufficient to slightly change our subroutine for $\power{\candgiven}$ (see below).

	The idea behind our algorithm is to divide the task of counting witnesses, in a way that resembles inclusion-exclusion. 
	First, we count witness \emph{candidates}, $\anum$-tuples of vectors that add up to the target vector but may repeat entries:
	\begin{align*}
		\candgiven = \Set{\avecvec\in\setvec^{\anum}}{\Sigma \avec_i = \tvec}.
	\end{align*}
	Then, we determine the number of \emph{failures}, candidates that indeed repeat an entry:
	\begin{align*}
		\failgiven = \Set{\avecvec\in\setvec^{\anum}}{\Sigma \avec_i = \tvec, \avec_i=\avec_j\text{ for some }i\neq j}.
	\end{align*}
	Counting witnesses now amounts to counting candidates and counting failures:
	\begin{align}
	\power{\witgiven} = \power{\candgiven} - \power{\failgiven}.\label{Equation:Decomposition}
	\end{align}

	We address the problem of counting candidates by computing a convolution operation over $\anum$ functions. 
    Doing this efficiently requires a trick known as Walsh-Hadamard transform~\cite{Ahmed1975,Rockmore1995}.
    It turns the convolution operation into a component-wise vector product.

    We address the problem of counting failures by means of two factorizations. 
    We first factorize the set of witnesses $\avecvec=(\avec_1,\ldots \avec_{\anum})$ into usage patterns.
    A usage pattern is a partition of the set of positions $[1, \anum]$ that indicates where vectors in $\avecvec$ repeat.
    The precise vectors are abstracted away.
    In a second step, we exploit the fact that we compute over $\F$. 
    We further factorize the usage patterns according to the parity of their equivalence classes. 
    Two patterns are considered equivalent if they have the same number of classes of odd and the same number of classes of even cardinality.
    With these factorizations, the number of failures admits a recurrence to the number of witnesses for a smaller parameter $\anum'<\anum$.
    Altogether, we arrive at a dynamic programming over the parameter $\anum$ to which the candidate count contributes only an additive factor.

    We complement our algorithm by two lower bounds.
    The first one shows that \witcountprob\ cannot be solved in time $2^{\smallo{d}}$ unless $\#\ETH$ fails.
    Here, $\#\ETH$ is a counting version of the exponential-time hypothesis, a standard lower-bound assumption in fine-grained complexity~\cite{Impagliazzo2001,Dell2014}.
    The result shows that our algorithm is optimal.
    The second lower bound shows that the decision version of $\witcountprob$ does not admit a polynomial kernel.
    Both lower bounds are obtained by reductions from perfect matching in hypergraphs.
    
	\subparagraph{Related Work.}
	
	We already discussed the related work in coding theory. 
	A key tool in our algorithm is the convolution over $\F^d$.
	Convolution-based methods \cite{Fomin2010,Cygan2015} have seen great success in parameterized complexity. 
	Bj\"orklund et al.  were the first to see the potential of the subset convolution in this context~\cite{Kaski2007}. 
	They gave an $\bigOS{2^n}$-time algorithm and applied it to partitioning problems and to a variant of $\steiner$. 
	Their algorithm is based on fast M\"obius and zeta transforms.
	The computation of the latter goes back to Yates~\cite{Yates1937}. 
	In \cite{Kaski2008}, subset convolution was applied to compute the Tutte polynomial of a graph.
	Different variants of $\domset$ were solved in \cite{Rooij2009} and \cite{Telle2015} via subset convolution. 
	Moreover, the technique was used as a subroutine in Cut \& Count~\cite{Cygan2011}. 
	The same paper presents an algorithm for the \emph{XOR product}, a convolution operation over $\mathbb{F}^d_p$, for $p = 2,4$. 
	By applying the algorithm for $p = 4$, one can lift the techniques presented in this paper to instances over $\mathbb{F}^d_4$. 
	However, one cannot generalize to arbitrary $p$ since the algorithm for the XOR product would suffer from rounding errors \cite{Cygan2011}.
	An application of subset convolution in automata theory is given in~\cite{Meyer2017}.
	
	Also other transform-based methods \cite{Rockmore1995,Ahmed1975} can be found in algorithms.
	In \cite{Cygan2009}, fast Fourier transform was instantiated to derive an algorithm for the $\bandwidth$ problem.
	Based on Yates algorithm and a space-efficient zeta transform, the domatic and the chromatic number of graphs were computed~\cite{Bjorklund2008,Kaski2010}.
	In \cite{Williams2009}, the fast Walsh-Hadamard transform was applied to solve $\kpath{k}$.
	Polynomial-space algorithms for a variety of problems were constructed in \cite{Lokshtanov2010} using transforms. 
	The authors of \cite{Kaski2012} consider efficient multiplication in M\"obius algebras in general.

	The methods in this paper are furthermore inspired by the inclusion-exclusion principle.
	It was first used by Bj\"orklund et al. in \cite{Bjorklund2006}, and independently by Koivistio in \cite{Koivisto2006} for counting covers and partitions.
	Also this technique was used in various algorithms and is particularly helpful when counting solutions \cite{Bjorklund2009,Husfeldt2006,Kaski2009,Bjorklund2010,BjorklundHamil2010}.

	\section{Parameterized Complexity}
\label{Section:Prelim}

	Our goal is to identify the influence of the parameters $\adim$ and $\sizesetvec$ on the problem $\witcountprob$. 
	Parameterized complexity provides a framework to do so. 
	We introduce the basics following~\cite{Flum2006,Downey2013}. 
	A \emph{parameterized problem} is a language $L \subseteq \Sigma^* \times \N$, where $\Sigma$ is a finite alphabet.
	The problem is called \emph{fixed-parameter tractable} if there is a deterministic algorithm that decides $(x,k) \in L$ in time $f(k) \cdot \power{x}^{\bigO{1}}$.
	Here, $f$ is any computable function that only depends on the parameter $k$.
	It is common to denote the runtime of the algorithm by $\bigOS{f(k)}$ to emphasize on the dominant factor.
	The class of all fixed-parameter-tractable problems is denoted by $\FPT$.
	The precise value of $f(k)$ is crucial.
	Finding upper and lower bounds for $f(k)$ is usually referred to as \emph{fine-grained complexity}.
	
	There is also a hardness theory.
	Not all $\NP$-complete problems are likely to be $\FPT$, for instance the problem of finding a clique of size $k$.
	Despite extensive effort, no algorithm was found that solves the problem in time $\bigOS{f(k)}$.
	In fact, the problem is known to be hard for the complexity class $\W[1]$.
	All problems with this property are unlikely to be $\FPT$.

	\section{Witness Counting}
\label{Section:Algorithm}

	Our main result is an optimal algorithm for $\witcountprob$ with logarithmic dependence on $\sizesetvec$.
	\begin{theorem}
		Let $\ops = 2^{\adim}{\adim}{\anum} + \anum^4$.
		\witcountprob\ can be solved with $\bigO{\ops}$ arithmetic operations and in time $\bigO{\ops \cdot \multtime{\anum\cdot \mylog\ \sizesetvec}}$, if $\multtime{x}$ is the time to multiply two $x$-bit numbers.
	\end{theorem}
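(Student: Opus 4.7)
The plan is to run a dynamic program over $\anum'=1,\dots,\anum$ that maintains $\power{\witof{\setvec}{\tvec'}{\anum'}}$ for every $\tvec'\in\allvec$, using the decomposition $\power{\witgiven}=\power{\candgiven}-\power{\failgiven}$ from Equation~(\ref{Equation:Decomposition}). At each stage $\anum'$ I compute the candidate table via a Fast Walsh--Hadamard Transform and the failure table via a recurrence that invokes only witness counts for smaller $\anum''<\anum'$, which are already tabulated. At the end I read off $\power{\witgiven}$.

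For the candidate step, let $f:\allvec\to\N$ be the indicator function of $\setvec$. Since $\power{\candof{\setvec}{\tvec'}{\anum'}}$ equals the $\anum'$-fold $\F$-convolution $f^{*\anum'}$ evaluated at $\tvec'$ and the Walsh--Hadamard transform diagonalises this convolution, I compute $\hat f$ once in $\bigO{2^{\adim}\adim}$ operations, raise it pointwise to every power $\anum'\le\anum$ while storing the intermediate spectra in $\bigO{2^{\adim}\anum}$ operations, and then invert each of the $\anum$ spectra via FWHT in a combined $\bigO{2^{\adim}\adim\anum}$ operations. This yields every candidate count within the claimed budget.

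For the failure step I follow the two factorisations sketched in the introduction. Each failure $\avecvec\in\setvec^{\anum'}$ induces a usage pattern, the partition of $[1,\anum']$ whose blocks are the positions on which $\avecvec$ repeats a vector. Over $\F$ a block of even size contributes $\vec{0}$ to the sum, so a pattern with $\oddcount$ odd blocks and $\evencount$ even blocks contributes $\power{\witof{\setvec}{\tvec'}{\oddcount}}\cdot(\sizesetvec-\oddcount)(\sizesetvec-\oddcount-1)\cdots(\sizesetvec-\oddcount-\evencount+1)$ failures: the first factor picks distinct odd-block vectors whose sum equals $\tvec'$, the falling factorial picks $\evencount$ further distinct vectors for the even blocks. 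Thus all patterns of parity type $(\oddcount,\evencount)$ contribute equally, and $\power{\failof{\setvec}{\tvec'}{\anum'}}$ is the sum, over types $(\oddcount,\evencount)\ne(\anum',0)$, of $N(\anum',\oddcount,\evencount)$ times the above product, where $N(\anum',\oddcount,\evencount)$ counts the set partitions of $[1,\anum']$ of that type. The constraint $\oddcount+2\evencount\le\anum'$ together with $(\oddcount,\evencount)\ne(\anum',0)$ forces $\oddcount<\anum'$, so the recurrence is well-founded. There are $\bigO{(\anum')^2}$ parity types and the inner falling factorial costs $\bigO{\anum'}$; summing over $\anum'$ and including a one-off tabulation of $N$ by a standard set-partition recurrence yields $\bigO{\anum^4}$ operations for all failure computations.

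Adding the two budgets gives $\bigO{2^{\adim}\adim\anum+\anum^4}=\bigO{\ops}$ arithmetic operations. For the time bound I observe that every count is bounded by $\sizesetvec^{\anum}$, so the numbers involved fit in $\bigO{\anum\cdot\mylog\,\sizesetvec}$ bits, giving the claimed per-operation factor $\multtime{\anum\cdot\mylog\,\sizesetvec}$. The main obstacle I expect is the parity-collapse justification: one must verify that the distinctness constraint between odd-block and even-block vectors genuinely factors into the stated falling factorial (with no residual inclusion--exclusion), that the coefficients $N(\anum',\oddcount,\evencount)$ admit a recurrence computable in polynomial time in $\anum$, and that the reduction $\oddcount<\anum'$ always holds, so that the dynamic program is well-defined rather than requiring enumeration of the Bell-many usage patterns directly.
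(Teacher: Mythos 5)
Your proposal matches the paper's algorithm step for step: candidates via the Walsh--Hadamard transform (computing all intermediate powers in $\bigO{2^\adim\adim\anum}$ operations), failures via usage patterns collapsed by parity type with the falling-factorial count $\frac{(\sizesetvec-\oddcount)!}{(\sizesetvec-\oddcount-\evencount)!}$ for the even blocks, well-foundedness from $\oddcount<\anum'$, and a polynomial-time tabulation of the parity-type counts $N(\anum',\oddcount,\evencount)$. One inconsistency, though: you say the dynamic program maintains $\power{\witof{\setvec}{\tvec'}{\anum'}}$ for \emph{every} $\tvec'\in\allvec$, yet you charge only $\bigO{\anum^4}$ for all failure computations. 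The failure recurrence at a fixed target $\tvec'$ invokes only witness counts at that same target (the odd-block vectors must sum to $\tvec'$, and the even-block vectors cancel), so it never needs other targets --- but if you really evaluate it at every $\tvec'$, the failure step costs $\Theta(2^\adim\anum^3)$, which exceeds $\bigO{\ops}$. The paper fixes the target $\tvec$ once and for all and runs the recurrence only there, reading off each $\power{\candof{\setvec}{\tvec}{i}}$ from the $i$-th inverse transform; doing the same repairs your budget and is all the correctness argument requires.
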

	\noindent The best known runtime for multiplication is $\multtime{x} = x \cdot \mylog \ x \cdot 2^{\bigO{\mylog^*x}}$ \cite{Furer2007}.
	In the theorem, we assume that the set $\setvec$ is given in terms of its characteristic function $\charfun{\setvec}$.
	If $V$ should have a different representation, we can compute $\charfun{\setvec}$ in $2^\adim$ look-ups without changing the statement.
	
	Actually, given $\setvec\subseteq\allvec$, $\tvec\in\allvec$, $\anum\in\N$, our algorithm determines the solution to all instances with the same set of vectors and the same target but $\anum'\leq \anum$.
	As explained in the introduction, it relies on Equation~\eqref{Equation:Decomposition} to decompose the witness-counting problem into counting candidates and failures.
	\subsection{Candidate Counting}

	We express the number of candidates in terms of a convolution of $\anum$ functions.
	The definition is as follows \cite{Ahmed1975,Cygan2011}. 
	The \emph{convolution} of two functions $f,g : \F^\adim \rightarrow \Z$ is the function $f \ast g : \F^\adim \rightarrow \Z$ which maps $x \in \F^\adim$ to the sum $\sum_{\avec_1 + \avec_2 = x} f(\avec_1) g(\avec_2)$.
	Note that the sum of $\avec_1$ and $\avec_2$ is computed over $\F^\adim$.
	The convolution operator is associative and for functions $f_1, \dots, f_\anum : \F^\adim \rightarrow \Z$, we have
	\begin{align}
		(f_1 \ast \dots \ast f_\anum)(x) = \sum_{\avec_1 + \dots + \avec_\anum = x} f_1(\avec_1) \cdots f_\anum(\avec_\anum).\label{Equation:Convolution}
	\end{align}
	For counting the candidates, we take all functions $f_1$ to $f_{\anum}$ to be the characteristic function of the given set of vectors~$\setvec$, $\charfun{\setvec}: \F^\adim \rightarrow \set{0,1}$.
	With Equation~\eqref{Equation:Convolution}, correctness is immediate.
	\begin{fact}
		$\power{\candgiven}=(\Asterisk^\anum \, \charfun{\setvec}) (\tvec)$.
	\end{fact}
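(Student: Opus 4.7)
The plan is to unfold the definition of the $\anum$-fold convolution, substitute the specific choice of functions $f_1 = \dots = f_\anum = \charfun{\setvec}$, and observe that the resulting sum is exactly the indicator-weighted count of $\anum$-tuples summing to $\tvec$.

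Concretely, I would first invoke Equation~\eqref{Equation:Convolution} with $x = \tvec$ and each $f_i = \charfun{\setvec}$, obtaining
\begin{align*}
    (\Asterisk^\anum \, \charfun{\setvec})(\tvec) \;=\; \sum_{\avec_1 + \dots + \avec_\anum = \tvec} \charfun{\setvec}(\avec_1) \cdots \charfun{\setvec}(\avec_\anum),
\end{align*}
where the summation ranges over all tuples $(\avec_1,\ldots,\avec_\anum) \in (\F^\adim)^\anum$ satisfying the linear constraint over $\F^\adim$. The next step is to note that each summand is $1$ precisely when every $\avec_i$ lies in $\setvec$, and $0$ otherwise, by definition of the characteristic function $\charfun{\setvec}$. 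Thus only tuples in $\setvec^\anum$ contribute, and each such tuple contributes exactly $1$.

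Finally, I would match this with the definition of $\candgiven$, which is the set of $\anum$-tuples $\avecvec \in \setvec^\anum$ with $\Sigma \avec_i = \tvec$ (with repetitions allowed, in contrast to $\witgiven$). The count of these tuples is exactly the surviving sum, yielding $\power{\candgiven} = (\Asterisk^\anum \, \charfun{\setvec})(\tvec)$.

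There is no real obstacle here; the statement is essentially a rephrasing of the convolution formula, and the only subtlety worth flagging is that $\candgiven$ permits repeated entries (so that the unconstrained sum over tuples is the right object), which is precisely why failures must later be subtracted to obtain $\power{\witgiven}$ via Equation~\eqref{Equation:Decomposition}.
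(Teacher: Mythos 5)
Your proposal is correct and follows the paper's own argument exactly: substitute $f_i = \charfun{\setvec}$ into Equation~\eqref{Equation:Convolution} and observe that the surviving terms are precisely the tuples in $\setvec^\anum$ summing to $\tvec$, which is $\candgiven$. The paper simply states that correctness is immediate; your write-up spells out the same steps.
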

	\noindent Note that the convolution cannot check whether the summed-up vectors $\avec_i$ are different.
	This is why we need the correction term $\power{\failgiven}$. 

	It remains to compute the convolution.
	To do so efficiently, we cannot just iterate over all summands in Equation~\eqref{Equation:Convolution}.
	Instead, we apply the \emph{Walsh-Hadamard transform} ($\WHT_\adim$)~\cite{Ahmed1975}.
	We defer the definition for the moment and stick with the properties. 		
	The transform turns a convolution into a pointwise product of integer vectors of size $2^{\adim}$. 
	Moreover, it is self-inverse up to a factor.
	\begin{theorem}[\cite{Ahmed1975}]
		$\WHT_\adim (f_1 \ast \dots \ast f_\anum) = \WHT_\adim(f_1) \cdots \WHT_\adim (f_\anum)$ and $\WHT_\adim (\WHT_\adim (f)) = 2^\adim \cdot f$.
	\end{theorem}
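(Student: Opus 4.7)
The plan is to fix the standard definition $\WHT_\adim(f)(y) = \sum_{x \in \F^\adim} (-1)^{\langle x, y \rangle} f(x)$, where $\langle \cdot, \cdot \rangle$ denotes the usual bilinear form on $\F^\adim$. The single algebraic fact that drives both halves of the theorem is that for every fixed $y$ the map $x \mapsto (-1)^{\langle x, y \rangle}$ is a group homomorphism from $(\F^\adim, +)$ to $(\{-1,+1\}, \cdot)$. This works precisely because addition on $\F^\adim$ is XOR, so $\langle x_1 + x_2, y \rangle \equiv \langle x_1, y \rangle + \langle x_2, y \rangle \pmod 2$, and this is exactly what lets one split the sign factor into a product.

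For the convolution identity I would first reduce to the case $\anum = 2$ by induction, using associativity of $\ast$, and then compute directly. Unfolding the definitions and swapping the two finite sums gives
\begin{align*}
\WHT_\adim(f_1 \ast f_2)(y)
  &= \sum_{\avec_1, \avec_2 \in \F^\adim} (-1)^{\langle \avec_1 + \avec_2, y \rangle} f_1(\avec_1) f_2(\avec_2) \\
  &= \Bigl(\sum_{\avec_1} (-1)^{\langle \avec_1, y \rangle} f_1(\avec_1)\Bigr) \cdot \Bigl(\sum_{\avec_2} (-1)^{\langle \avec_2, y \rangle} f_2(\avec_2)\Bigr),
\end{align*}
where the homomorphism property is used to factor the exponent in the last step. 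The right-hand side is exactly $\WHT_\adim(f_1)(y) \cdot \WHT_\adim(f_2)(y)$, and iterating this identity completes the induction.

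For the self-inverse identity I would expand $\WHT_\adim(\WHT_\adim(f))(x)$, swap the order of summation, and collect on the shift $x+z$:
\begin{align*}
\WHT_\adim(\WHT_\adim(f))(x) = \sum_{z \in \F^\adim} f(z) \sum_{y \in \F^\adim} (-1)^{\langle x + z, y \rangle}.
\end{align*}
The inner character sum equals $2^\adim$ when $x+z = 0$ and vanishes otherwise; the vanishing is the step that needs actual argument. If $x+z \neq 0$, I would pick any coordinate $i$ with $(x+z)_i = 1$ and pair each $y$ with the vector obtained by flipping $y_i$. The two paired terms carry opposite signs and cancel, so only the summand $z = x$ survives, contributing $2^\adim \cdot f(x)$. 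The whole argument is essentially bookkeeping; the main thing to be careful about is fixing a normalization of $\WHT_\adim$ that is consistent with the cited convention, since the factor $2^\adim$ in the second claim comes directly from this character-orthogonality estimate.
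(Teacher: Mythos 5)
Your proof is correct, and it is a genuinely valid route to both identities: the character-sum definition $\WHT_\adim(f)(y) = \sum_{x} (-1)^{\langle x, y \rangle} f(x)$, the factoring of $(-1)^{\langle \cdot, y \rangle}$ as a homomorphism on $(\F^\adim,+)$, and the bit-flip pairing to kill the off-diagonal character sums are all standard and sound. The paper itself does not prove this theorem (it is cited from Ahmed~et~al.), but the surrounding text works with a different, equivalent presentation: $\WHT_\adim(f) = \hadmatrix_\adim \cdot f$ with $\hadmatrix_\adim$ given by the $2\times 2$ block recursion, and the self-inverse property is attributed to the closed-form inverse $\hadmatrix_\adim^{-1} = 2^{-\adim}\hadmatrix_\adim$, which in that framework one would establish by induction on the recursive block structure. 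Your character viewpoint buys transparency — both identities fall out of one algebraic fact with no induction on $\adim$ — while the paper's matrix-recursion viewpoint buys the divide-and-conquer structure that yields the $\bigO{2^\adim \adim}$ fast transform used in Proposition~\ref{Proposition:WalshHadamard}. The one thing you should make explicit, beyond the normalization caveat you already note, is that your definition agrees with the paper's: the $(y,x)$-entry of $\hadmatrix_\adim$ equals $(-1)^{\langle x, y \rangle}$ under the binary indexing, which is a one-line induction on $\adim$ using the block recursion (the leading bit contributes the $\pm$ sign between the top and bottom block rows). With that reconciliation stated, your argument is a complete and self-contained proof of the cited theorem.
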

	Given the theorem, we can compute the convolution in Equation~\eqref{Equation:Convolution} via the Walsh-Hadamard transform of $\charfun{\setvec}$, followed by multiplications of vectors of size $2^{\adim}$, followed by another transform and a division. 
	An algorithm called the \emph{fast Walsh-Hadamard transform} computes the transform in $\bigO{2^\adim \adim}$ arithmetic operations~\cite{Ahmed1975,Rockmore1995}.
	Together with the multiplications, we arrive at the following complexity estimate.
	Note that it refers to $\power{\candof{\setvec}{\tvec}{i}}$ for all $i\leq k$.
	\begin{proposition}\label{Proposition:WalshHadamard}
		$\power{\candof{\setvec}{\tvec}{i}}$ for all $i\leq k$ can be computed in $\bigO{2^{\adim} \adim \anum}$ arithmetic operations.
	\end{proposition}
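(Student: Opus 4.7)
The plan is to feed the characteristic function $\charfun{\setvec}$ through the pipeline suggested by the two properties of $\WHT_\adim$ recalled just before the statement. Concretely, by the fact that $\power{\candof{\setvec}{\tvec}{i}}=(\Asterisk^{i}\,\charfun{\setvec})(\tvec)$ together with the convolution-to-product identity, we have
\begin{align*}
    \power{\candof{\setvec}{\tvec}{i}} \;=\; \frac{1}{2^{\adim}}\,\WHT_{\adim}\!\bigl(\WHT_{\adim}(\charfun{\setvec})^{i}\bigr)(\tvec),
\end{align*}
where the power is taken pointwise. This suggests a three-stage algorithm: transform once, raise pointwise to the $i$-th power, transform back.

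First I would invoke the fast Walsh--Hadamard transform on $\charfun{\setvec}$ to produce a vector $g\in\Z^{\F^{\adim}}$ in $\bigO{2^{\adim}\adim}$ arithmetic operations. Second, I would compute the pointwise powers $g^{1},g^{2},\ldots,g^{\anum}$ \emph{incrementally}: set $h_{1}\coloneqq g$ and, for each $i=2,\ldots,\anum$, compute $h_{i}(x)\coloneqq h_{i-1}(x)\cdot g(x)$ at every $x\in\F^{\adim}$. This is the key efficiency trick; a naive recomputation of each power would cost $\bigO{2^{\adim}\adim \anum}$ multiplications just here, but the incremental scheme uses only $\bigO{2^{\adim})$ multiplications per~$i$, i.e.\ $\bigO{2^{\adim}\anum}$ in total. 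Third, for each $i\leq \anum$ I would run the fast Walsh--Hadamard transform on $h_{i}$ and divide the $\tvec$-entry by $2^{\adim}$, which by the self-inversion property yields $\power{\candof{\setvec}{\tvec}{i}}$. This costs $\bigO{2^{\adim}\adim}$ operations per value of $i$, hence $\bigO{2^{\adim}\adim \anum}$ in total.

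Summing the three stages gives the claimed bound $\bigO{2^{\adim}\adim \anum}$ on the number of arithmetic operations. Correctness follows directly from the two displayed identities quoted from the Walsh--Hadamard theorem applied to $f_{1}=\cdots=f_{i}=\charfun{\setvec}$. There is no real obstacle here beyond bookkeeping; the only pitfall to avoid is recomputing $\WHT_{\adim}(\charfun{\setvec})^{i}$ from scratch for each $i$, which is precisely what the incremental update in the second stage prevents.
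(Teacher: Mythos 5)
Your proof is correct and takes essentially the same approach as the paper: one forward fast Walsh--Hadamard transform, $\anum$ incremental pointwise multiplications to accumulate the powers, and $\anum$ inverse transforms with final division, totalling $\bigO{2^{\adim}\adim\anum}$ arithmetic operations. This is precisely the pipeline the paper sketches when it notes that computing the full table for all $i\leq\anum$ (rather than $\power{\candgiven}$ alone) requires $\anum$ multiplications and $\anum$ transforms.
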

	The reason we determine the intermediary values $\power{\candof{\setvec}{\tvec}{i}}$ is that the overall algorithm for computing $\power{\witgiven}$ is a dynamic programming which may access them.
	To compute only the value $\power{\candgiven}$, two transforms and an iterated multiplication with $\mylog\ \anum$ operations would suffice. Hence, the parameter $\anum$ would contribute only logarithmically to the complexity.  
	Instead, we need to apply $\anum$ multiplications and $\anum$ transforms.  
	
	The Walsh-Hadamard transform~\cite{Ahmed1975} is essential in the above algorithm.
	It is based on the \emph{Hadamard Matrix} $\hadmatrix_\adim$  defined recursively by
	$\hadmatrix_0 = \begin{bmatrix} 1 \end{bmatrix}$
	and for $\ell > 0$ by
	\begin{align*}
		\hadmatrix_{\ell} =
		\begin{bmatrix}
		\hadmatrix_{\ell-1} & \hadmatrix_{\ell-1} \\
		\hadmatrix_{\ell-1} & -\hadmatrix_{\ell-1}
		\end{bmatrix}
		\in \Z^{2^\ell \times 2^\ell}.
	\end{align*}
	The \emph{Walsh-Hadamard transform} of $f : \F^\adim \rightarrow \Z$ is defined by $\WHT_\adim(f) = \hadmatrix_\adim \cdot f$.
	In this product, $f$ is seen as a vector in $\Z^{2^\adim}$.
	The \emph{fast Walsh-Hadamard transform} is a dynamic programming algorithm to compute $\WHT_\adim(f)$.
	It is based on the recursive structure of $\hadmatrix_\adim$ and takes $\bigO{2^\adim \adim}$ arithmetic operations.
	That the transform is self-inverse is based on the fact that the Hadamard Matrix has a simple inverse, $\hadmatrix_\adim^{-1} = 1/2^\adim \cdot \hadmatrix_\adim$.
	
	\subsection{Failure Counting}

	We divide the task of counting failures along the usage pattern implemented by the failures.
	Usage pattern are defined via partitions.
	An \emph{unordered partition} of a set $\aset$ is a collection \mbox{$\aupart=\set{\aset_1,\ldots, \aset_n}$} of non-empty and disjoint subsets that together cover $\aset$ in that $\bigcup \aset_i = \aset$.
	We use $\upartof{\aset}$ for the set of all unordered partitions of $\aset$.
	We call a partition \emph{trivial} if $n=\power{\aset}$, which means the classes consist of single elements.  
	In our development, we also use \emph{ordered partitions} of $\aset$, tuples $(\aset_1,\ldots, \aset_n)$ satisfying the same non-emptiness, disjointness, and covering constraints.
	
	Recall that a failure is a tuple $(\avec_1,\ldots, \avec_{\anum})$ where $\avec_i=\avec_j$ for some $i\neq j$.
	A failure induces an equivalence $\vecequivof{\avecvec}$ on the set of positions $[1, \anum]$ that tracks the usage of vectors: $i\vecequivof{\avecvec}j$ if $\avec_i=\avec_j$.
	The \emph{usage pattern} of the failure is the unordered partition of $[1,\anum]$ induced by $\vecequivof{\avecvec}$.
	We use the following function to extract the usage pattern of a failure:
	\begin{align*}
		\shape:\failgiven&\rightarrow \upartof{[1,\anum]}\\
		\avecvec&\mapsto\factorize{[1, \anum]}{\vecequivof{\avecvec}}.
	\end{align*}
	Note that by definition no failure maps to a trivial partition.
	There is at least one non-trivial class in $\factorize{[1,\anum]}{\vecequivof{\avecvec}}$.
	This explains the index in the following disjoint union:
	\begin{fact}\label{Fact:FailViaShape}
		$\failgiven = \biguplus_{\substack{\aupart\in \upartof{[1, \anum]}\\\aupart\text{ non-trivial}}}\shapeinvof{\aupart}$.
	\end{fact}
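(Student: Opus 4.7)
The plan is to verify that $\shape$ is a well-defined total function from $\failgiven$ into the set of \emph{non-trivial} unordered partitions of $[1, \anum]$; once this is done, the claimed disjoint-union decomposition is an immediate consequence of the partition of the domain of any function into preimages of its codomain values.

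First I would check that for every $\avecvec \in \failgiven$ the relation $\vecequivof{\avecvec}$ defined by $i \vecequivof{\avecvec} j$ iff $\avec_i = \avec_j$ is an equivalence on $[1, \anum]$; reflexivity, symmetry and transitivity are inherited directly from equality on $\allvec$. Its quotient is therefore an unordered partition of $[1, \anum]$, so the map $\shape$ is well-defined and total on $\failgiven$. Next I would argue that the image lies inside the non-trivial partitions: by the definition of $\failgiven$ there exist indices $i \neq j$ with $\avec_i = \avec_j$, hence $i$ and $j$ lie in a common class of $\vecequivof{\avecvec}$, which therefore has cardinality at least $2$; so $\shapeof{\avecvec}$ cannot be the trivial partition into singletons.

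Given these two facts, I would conclude as follows. Since $\shape$ is a function, the preimages $\shapeinvof{\aupart}$ for distinct $\aupart$ are pairwise disjoint, and every $\avecvec \in \failgiven$ belongs to exactly one preimage, namely $\shapeinvof{\shapeof{\avecvec}}$. As just shown, the value $\shapeof{\avecvec}$ ranges only over non-trivial partitions, and any non-trivial $\aupart$ not in the image simply contributes the empty set to the union. Taking the disjoint union over all non-trivial $\aupart \in \upartof{[1,\anum]}$ therefore recovers exactly $\failgiven$, yielding the stated equality.

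There is no real obstacle here: the statement is essentially the definitional fact that a function partitions its domain into preimages, together with the observation that failures correspond precisely to non-trivial usage patterns. The only thing one has to be mindful of is the distinction between ordered and unordered partitions, so that $\shape$ is interpreted as taking values in $\upartof{[1,\anum]}$ and not in a set of tuples.
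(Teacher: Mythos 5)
Your proposal is correct and matches the paper's own (very brief) justification: the paper only observes that no failure maps to a trivial partition and then treats the decomposition as the standard fact that preimages of a function partition its domain, which is exactly your argument.
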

	The number of failures is thus the sum over $\power{\shapeinvof{\aupart}}$.
	There are, however, too many unordered partitions as can be iterated over efficiently.
	We factorize the set of unordered partitions, exploiting that we compute over $\F$. 
	The field of characteristic $2$ has the property that $\avec+\avec = 0$. 
	As a consequence, the vectors from $\avecvec$ whose indices belong to classes of even size do not contribute to the target vector. 
	Indeed, consider a class with an even number of indices, $\set{i_1,\ldots, i_{2l}}\in \shapeof{\avecvec}$. 
	By definition of the equivalence, $v_{i_1}=\ldots =v_{i_{2l}}$ and hence, the vectors cancel out.
	Similarly, the vectors whose indices belong to classes of odd size only contribute a single vector to the target. 
	
	The discussion motivates the following definitions.
	Given a partition $\aupart\in\upartof{[1, k]}$, we use $\evenof{\aupart}\in \N$ to denote the number of classes in $\aupart$ that have an even cardinality.
	Similarly, we let $\oddof{\aupart}\in \N$ denote the number of classes of odd cardinality in $\aupart$.
	The \emph{parity-counting function} maps a partition to this pair of values:
	\begin{align*}
		\parityfun:\upartof{[1, \anum]}&\rightarrow [1, \anum]\times[1, \anum]\\
		\aupart&\mapsto (\evenof{\aupart}, \oddof{\aupart}).
	\end{align*}
	Note that $\parityfun$ factorizes the set of partitions. 
	The following lemma is key to our development.
	It shows that, as far as counting is concerned, $\shapeinv$ is insensitive to this factorization. 
	\begin{lemma}\label{Lemma:ShapeViaParity}
		Let $\aupart\in \upartof{[1, k]}$ with $\parityfunof{\aupart}=(\evencount, \oddcount)$. 
		Recall that $\sizesetvec = \power{\setvec}$.
		We have 
		\begin{align*}
			\power{\shapeinvof{\aupart}}=
			\power{\witof{\setvec}{\tvec}{\oddcount}}\cdot\frac{(\sizesetvec-\oddcount)!}{(\sizesetvec-\oddcount-\evencount)!}.
		\end{align*}
	\end{lemma}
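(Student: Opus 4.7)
The plan is to show that $\shapeinvof{\aupart}$ is in bijection with a product of two sets of injections: one assigning vectors to the odd classes of $\aupart$ subject to a summing constraint, the other assigning vectors to the even classes subject only to a disjointness constraint. Concretely, a tuple $\avecvec=(\avec_1,\ldots,\avec_{\anum})$ satisfies $\shapeof{\avecvec}=\aupart$ precisely when, for every class $A\in\aupart$, all entries $\avec_i$ with $i\in A$ agree on a common value $f(A)\in\setvec$, and distinct classes receive distinct values. So $\shapeinvof{\aupart}$ is in bijection with the set of injective maps $f:\aupart\to\setvec$ whose induced sum $\sum_{A\in\aupart}\power{A}\cdot f(A)$ equals $\tvec$.

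Next, I exploit the characteristic of $\F$: $\power{A}\cdot f(A)$ equals $0$ when $\power{A}$ is even and equals $f(A)$ when $\power{A}$ is odd. The summing condition therefore collapses to $\sum_{A\text{ odd}} f(A)=\tvec$, which only constrains the restriction $f_{\oddcount}$ of $f$ to the $\oddcount$ odd classes. The restriction $f_{\evencount}$ to the $\evencount$ even classes is free apart from global injectivity, i.e., its image must avoid that of $f_{\oddcount}$. Hence the data of $f$ decomposes into a pair $(f_{\oddcount},f_{\evencount})$ whose components may be chosen independently.

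Finally, I count each piece. Fixing any enumeration of the odd classes turns the choice of $f_{\oddcount}$ into the choice of an ordered $\oddcount$-tuple of pairwise distinct vectors in $\setvec$ summing to $\tvec$, of which there are exactly $\power{\witof{\setvec}{\tvec}{\oddcount}}$ by definition. Given $f_{\oddcount}$, the number of extensions $f_{\evencount}$ is the number of injections of the $\evencount$ even classes into the remaining $\sizesetvec-\oddcount$ vectors of $\setvec$, which is the falling factorial $(\sizesetvec-\oddcount)!/(\sizesetvec-\oddcount-\evencount)!$. Multiplying the two counts yields the lemma. The only step that requires care is the initial bijection between tuples $\avecvec$ and injections $f$ on the unordered set of classes; once that identification is made precise, the rest is routine bookkeeping, so I expect no real obstacle.
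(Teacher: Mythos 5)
Your proposal is correct and follows essentially the same route as the paper's own proof: both establish a bijection between $\shapeinvof{\aupart}$ and injective maps from the classes of $\aupart$ into $\setvec$ subject to a summing constraint, use the characteristic-2 collapse to restrict that constraint to the odd classes, and then count the odd part via $\power{\witof{\setvec}{\tvec}{\oddcount}}$ and the even part via the falling factorial $(\sizesetvec-\oddcount)!/(\sizesetvec-\oddcount-\evencount)!$.
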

	\begin{proof}
		Consider $\aupart\in \upartof{[1, k]}$ with $\parityfunof{\aupart}=(\evencount, \oddcount)$. 
		In a first step, we characterize $\shapeinvof{\aupart}$ in terms of a set of functions that is easier to count. 
		The observation is this. 
		Every failure $\avecvec\in \shapeinvof{\aupart}$ is uniquely determined by choosing a vector $\avec\in \setvec$ for each class in $S\in \aupart$. 
		Formally, an \emph{instantiation} of the partition $\aupart$ is an injective function
		\begin{align*}
			\inst:\aupart\rightarrow \setvec\qquad\text{such that}\qquad
			\tvec = \sum_{\substack{S\in \aupart\\
			\power{S}\text{ odd}}}\instof{S}.
		\end{align*}
		Injectivity follows from the definition of $\vecequivof{\avecvec}$, requiring the vectors associated to different classes to be different. 
		That the sum of the vectors has to be the target vector~$\tvec$ is by the definition of failures. 
		As we compute over $\F$, only the vectors associated to classes of odd cardinality contribute to the target vector.
		We use $\setinstof{\aupart}$ to refer to the set of all instantiation functions.  
		
		In turn, every failure $\avecvec\in \shapeinvof{\aupart}$ induces an instantiation $\inst\in \setinstof{\aupart}$. 
		To a class $S$ with $i\in S$, we associate the vector $\avec_{i}$. 
		The definition of $\vecequivof{\avecvec}$ ensures independence of the representative. 
		Combined with the previous paragraph, we have a bijection between $\shapeinvof{\aupart}$ and $\setinstof{\aupart}$.
		
		By sorting the partitions, instantiations $\inst\in\setinstof{\aupart}$ can be seen as pairs \mbox{$(\avecvecp, \avecvecpp)\in \setvec^{\oddcount}\times \setvec^{\evencount}$.}
		The first component $\avecvecp$ is an $\oddcount$-tuple of different vectors from $\setvec$ that sum-up to $\tvec$. 
		Phrased differently, $\avecvecp$ is a solution to $\witof{\setvec}{\tvec}{\oddcount}$, the witness-counting problem where the number is reduced to $\oddcount$. 
		The second component $\avecvecpp$ is an $\evencount$-tuple of distinct vectors that do not occur in $\avecvecp$. 
		For the first component, there are $\power{\witof{\setvec}{\tvec}{\oddcount}}$ many choices.
		For each first component, there are \mbox{$(\sizesetvec-\oddcount)\cdots (\sizesetvec-\oddcount-\evencount + 1)=\frac{(\sizesetvec-\oddcount)!}{(\sizesetvec-\oddcount-\evencount)!}$} many choices left for the second component.
	\end{proof}
	Combining Fact~\ref{Fact:FailViaShape} and Lemma~\ref{Lemma:ShapeViaParity}, we arrive at a formula for counting failed candidates.
	The inequality $\evencount+\oddcount<\anum$ is again due to the fact that failures only induce non-trivial partitions. 
	\begin{proposition}\label{Proposition:FailuresViaWit}
		$\power{\failgiven} = \sum_{\substack{(\evencount, \oddcount)\in \N\times \N\\ \evencount+\oddcount < \anum} }\power{\parityfuninvof{\evencount, \oddcount}}\cdot\frac{(m-\oddcount)!}{(m-\oddcount-\evencount)!}\cdot \power{\witof{\setvec}{\tvec}{\oddcount}}$.
	\end{proposition}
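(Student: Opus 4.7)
The plan is to combine the disjoint decomposition from Fact~\ref{Fact:FailViaShape} with the counting formula from Lemma~\ref{Lemma:ShapeViaParity} by regrouping non-trivial partitions according to their parity signature $(\evencount,\oddcount)$. The key observation that makes this work is that the right-hand side of Lemma~\ref{Lemma:ShapeViaParity} depends on $\aupart$ \emph{only} through $\parityfunof{\aupart}$, so all partitions lying in a single fiber of $\parityfun$ contribute the same summand and can be collected together.

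Concretely, I would first pass to cardinalities in Fact~\ref{Fact:FailViaShape}, and then stratify the outer sum by the fibers of $\parityfun$:
\[
    \power{\failgiven} \; = \; \sum_{(\evencount,\oddcount)\in\N\times\N}\; \sum_{\substack{\aupart \text{ non-trivial}\\ \parityfunof{\aupart}=(\evencount,\oddcount)}} \power{\shapeinvof{\aupart}}.
\]
Applying Lemma~\ref{Lemma:ShapeViaParity} now replaces every inner summand by the factor $\power{\witof{\setvec}{\tvec}{\oddcount}}\cdot(\sizesetvec-\oddcount)!/(\sizesetvec-\oddcount-\evencount)!$, which is independent of the specific $\aupart$. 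The inner sum therefore collapses to (number of non-trivial partitions with signature $(\evencount,\oddcount)$) times this factor.

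The final step is to verify that the restriction ``$\aupart$ non-trivial'' is captured exactly by the inequality $\evencount + \oddcount < \anum$ in the statement. A partition of $[1,\anum]$ has $\evencount+\oddcount$ classes in total, and each class contains at least one element, so $\evencount+\oddcount \le \anum$ with equality iff every class is a singleton, i.e., iff $\aupart$ is the trivial partition. Consequently, for every $(\evencount,\oddcount)$ with $\evencount+\oddcount<\anum$ \emph{all} partitions in $\parityfuninvof{\evencount,\oddcount}$ are non-trivial, so the inner count equals $\power{\parityfuninvof{\evencount,\oddcount}}$, and substituting yields the claimed identity. I do not anticipate a real obstacle: the proof is a straightforward regrouping of the two previous statements, with the only small care needed in translating non-triviality into the clean algebraic condition $\evencount+\oddcount<\anum$.
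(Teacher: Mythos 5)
Your proposal is correct and matches the paper's (largely implicit) proof: both combine Fact~\ref{Fact:FailViaShape} with Lemma~\ref{Lemma:ShapeViaParity}, regroup by the fibers of $\parityfun$, and observe that non-triviality of a partition of $[1,\anum]$ is equivalent to $\evencount+\oddcount<\anum$ since equality forces all classes to be singletons. You spell out the non-triviality-to-inequality translation explicitly, which the paper only asserts in a single sentence, but the argument is the same.
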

	The proposition yields a recurrence to determine  the number of witnesses $\power{\witgiven}$. 
	We return to this in a moment when we discuss the overall algorithm. 
	A factor that is local to the failure count is $\power{\parityfuninvof{\evencount, \oddcount}}$, the number of partitions with a given parity count.
	It can be determined with a dynamic programming that runs in polynomial time.
	\begin{lemma}\label{Lemma:TableParityFun}
		Computing all $\power{\parityfuninvof{\evencount, \oddcount}}$ with $\evencount+\oddcount<\anum$ needs $\bigO{\anum^4}$ arithmetic operations.
	\end{lemma}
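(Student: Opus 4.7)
The plan is to fill a three-dimensional dynamic programming table. Let $N[n, \evencount, \oddcount]$ denote the number of unordered partitions of $[1, n]$ with exactly $\evencount$ classes of even cardinality and $\oddcount$ classes of odd cardinality. Then $\power{\parityfuninvof{\evencount, \oddcount}} = N[\anum, \evencount, \oddcount]$, so the task reduces to computing the slice of $N$ at $n = \anum$.

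First I would derive a recurrence by fixing the block that contains the maximum element of $[1, n]$. Letting $s$ denote its size, we pick the $s - 1$ companions from $[1, n-1]$ in $\binom{n-1}{s-1}$ ways and partition the remaining $n - s$ elements, decrementing $\oddcount$ if $s$ is odd and $\evencount$ if $s$ is even. This yields
\begin{align*}
N[n, \evencount, \oddcount] = \sum_{\substack{1 \le s \le n \\ s \text{ odd}}} \binom{n-1}{s-1}\, N[n-s, \evencount, \oddcount - 1] \,+\, \sum_{\substack{2 \le s \le n \\ s \text{ even}}} \binom{n-1}{s-1}\, N[n-s, \evencount - 1, \oddcount],
\end{align*}
with base case $N[0, 0, 0] = 1$ and $N[0, \evencount, \oddcount] = 0$ otherwise, treating entries with negative indices as $0$. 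Next I would precompute all binomial coefficients $\binom{i}{j}$ for $0 \le j \le i \le \anum$ via Pascal's triangle in $\bigO{\anum^2}$ operations, so that each lookup inside the sum costs $\bigO{1}$.

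For the accounting: the table has $\bigO{\anum^3}$ entries (the three indices are each bounded by $\anum$), and each entry is computed by a single pass of length at most $\anum$ over precomputed values, giving $\bigO{\anum^4}$ arithmetic operations in total, which dominates the $\bigO{\anum^2}$ binomial precomputation. The main obstacle I expect is the correctness argument rather than the runtime bound: one has to verify that the recurrence enumerates every unordered partition exactly once. This is handled by the standard trick of fixing the block containing the maximum element; because that block is uniquely determined inside any partition, there is no double counting across the summation over the block size $s$, and the sub-partition of the remaining $n - s$ elements is independent and accounted for by a smaller entry of $N$.
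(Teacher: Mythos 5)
Your proof is correct and reaches the same $\bigO{\anum^4}$ bound, but by a slightly different route than the paper. The paper's dynamic program counts \emph{ordered} partitions $\parcounttab{s}{x}{y}$ via the recurrence $\parcounttab{s}{x}{y} = \sum_{p=1}^s \binom{s}{p}\bigl([p\text{ even}]\cdot \parcounttab{s-p}{x-1}{y}+[p\text{ odd}]\cdot \parcounttab{s-p}{x}{y-1}\bigr)$, which repeatedly chooses an arbitrary $p$-subset as the ``next'' block, and only afterwards converts to unordered partitions by dividing by $(\evencount+\oddcount)!$. You instead count unordered partitions directly, using the standard device of conditioning on the block that contains the maximum element $n$; since that block is uniquely determined within any partition, the summation over $s$ is automatically disjoint and no factorial correction is needed. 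This buys you a cleaner correctness argument (no need to verify that every unordered partition lifts to exactly $(\evencount+\oddcount)!$ ordered ones) at no extra cost. A second, minor difference is in the binomial precomputation: you use Pascal's triangle in $\bigO{\anum^2}$ operations, whereas the paper tabulates them via the factorial formula in $\bigO{\anum^3}$; both are dominated by the $\bigO{\anum^4}$ cost of filling the table, so the stated bound is unaffected. Your recurrence, base cases, and accounting are all sound.
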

	\begin{proof}
		Given a $\anum$-element set, we show how to compute the number of ordered partitions with $\evencount$-many classes of even and $\oddcount$-many classes of odd cardinality. 
		Let $\parcounttab{\anum}{\evencount}{\oddcount}$ represent this number.
		The corresponding number of unordered partitions can simply be obtained by a division:
		\begin{align*}
			\power{\parityfuninvof{\evencount, \oddcount}} = \parcounttab{\anum}{\evencount}{\oddcount} / (\evencount + \oddcount)! \ .
		\end{align*}
		The number $\parcounttab{\anum}{\evencount}{\oddcount}$ satisfies the following recurrence, where $[\text{cond}]$ is a function that evaluates to $1$ if the given condition holds and to $0$ otherwise: 
		\begin{align*}
			\parcounttab{s}{x}{y} = \sum_{p=1}^s \binom{s}{p}\big([p\text{ even}]\cdot \parcounttab{s-p}{x-1}{y}+[p\text{ odd}]\cdot \parcounttab{s-p}{x}{y-1}\big).
		\end{align*}
		The equation can be understood as a recursion that explicitly builds-up a partition. 
		It adds to the current partition a new class with $p$ elements. 
		Depending on whether $p$ is even or odd, the construction continues with the parameters adjusted accordingly.  
		The bases cases are immediate. 
		
		The algorithm first tabulates the binomials using $\anum^3$ operations.
		Indeed, as $s, p\leq k$ there are quadratically many pairs each of which requiring at most $\anum$ multiplications.
		The computation is then a dynamic programming that fills a table of size $\anum^3$.
		For each entry, we sum over at most $\anum$ numbers, where we can look-up earlier computed entries and the binomial.
		Hence, we need $\bigO{\anum}$ operations for a single entry.
		This results in $\bigO{\anum^4}$ arithmetic operations to fill the whole table.
	\end{proof}

	\subsection{Overall Algorithm}

	With Equation~\eqref{Equation:Decomposition} and Proposition~\ref{Proposition:FailuresViaWit}, we obtain
	\begin{align*}
		\power{\witgiven}
		&= \power{\candgiven} - \sum_{\substack{(\evencount, \oddcount)\in \N\times \N\\ \evencount+\oddcount < \anum} }\power{\parityfuninvof{\evencount, \oddcount}}\cdot\frac{(m-\oddcount)!}{(\sizesetvec-\oddcount-\evencount)!}\cdot \power{\witof{\setvec}{\tvec}{\oddcount}}.
	\end{align*}
	The overall algorithm for computing $\power{\witgiven}$ is thus a dynamic programming over $\anum$. 
	It accesses powerful look-up tables that we initialize in a first phase. 
	We compute $\power{\candof{\setvec}{\tvec}{i}}$ for all $i\leq \anum$ using the Walsh-Hadamard transform.
	As stated in Proposition~\ref{Proposition:WalshHadamard}, this means $\bigO{2^{\adim} \adim \anum}$ arithmetic operations.
	Moreover, we tabulate $\power{\parityfuninvof{\evencount, \oddcount}}$ for all $\evencount+\oddcount<\anum$. 
	According to Lemma~\ref{Lemma:TableParityFun}, this costs $\bigO{\anum^4}$ arithmetic operations.
	We also precompute $\frac{(\sizesetvec-\oddcount)!}{(\sizesetvec-\oddcount-\evencount)!}$ for all $\oddcount+\evencount<\anum$. 
	The factorials cancel out. 
	For given $\oddcount$ and $\evencount$, we thus have at most $\anum$ multiplications. 
	Altogether, also that table can be filled with $\bigO{\anum^3}$ operations.
	
	The dynamic programming takes $\anum$ iterations. 
	In each iteration, we have a sum over at most $\anum^2$ numbers.
	Since we have tabulated for each $\evencount+\oddcount<\anum$ all the  values needed, we can evaluate the sum in $\bigO{\anum^2}$ operations.
	Hence, the overall number of operations in the dynamic programming is $\bigO{\anum^3}$. 
	Together with the initialization, this yields the announced
	$\bigO{2^{\adim} \adim \anum + \anum^4}$ operations.
	
	The numbers over which we compute, both in the initialization and in the dynamic programming, are all positive and bounded from above by $\power{\candgiven}$, the number of candidates. 
	This number is at most $\sizesetvec^{\anum}$. 
	Hence, the arithmetic operations are executed on $\mylog\ \sizesetvec^{\anum}=\anum\cdot \mylog\ \sizesetvec$-bit numbers. 
	We estimate the cost of an operation as that of a multiplication. 
	Actually, the Walsh-Hadamard transform only uses additions and subtractions and is therefore slightly cheaper.

	
	\section{Lower Bounds}
\label{Section:Lowerbound}

	We present two lower bounds for $\witcountprob$. 
	The first lower bound shows that the existence of a $2^{\smallo{d}}$-time algorithm would contradict $\# \ETH$, a version of the exponential-time hypothesis for counting problems. 
	The second lower bound shows that the decision version of $\witcountprob$ does not admit a polynomial kernel unless $\NP \subseteq \conpoly$.
	Both bounds are based on a reduction from $\perfmatch{\ell}$, the problem of determining whether an $\ell$-uniform hypergraph admits a perfect matching.
	We introduce the needed notions.
	
	A \emph{hypergraph} is a pair $\graph = (\univ,\edges)$, where $\univ$ is a finite set of \emph{vertices} and $\edges$ is a set of \emph{edges}.
	Edges in a hypergraph connect a number of vertices. 
	Formally, the set of edges is a collection of subsets of vertices, $\edges \subseteq \powerset{\univ}$. 
	A hypergraph is $\ell$\emph{-uniform} if every edge connects exactly $\ell$ vertices, $\power{e} = \ell$ for all $e\in\edges$.
	Note that a $2$-uniform hypergraph is just an undirected graph.
	
	A \emph{perfect matching} of an $\ell$-uniform hypergraph $\graph = (\univ, \edges)$ is an independent set of edges that covers $\univ$.
	To be precise, it is a subset $\per \subseteq \edges$ such that all vertices are contained in an edge in $\per$ and no two edges in $\per$ share a vertex.
	Note that a perfect matching consists of exactly $\power{\univ} / \ell$ many edges and thus only exists if $\power{\univ}$ is divisible by $\ell$.
	We use $\perfect{\graph}$ to denote the set of perfect matchings of $\graph$.
	For fixed $\ell$, the problem $\perfmatch{\ell}$ is the following: 
	given an $\ell$-uniform hypergraph $\graph$, decide whether there exists a perfect matching of $\graph$.
	
	We show a polynomial-time reduction from the $\perfmatch{\ell}$ problem to the decision variant of $\witcountprob$. 
	It works for any $\ell$.
	The reduction implies a complexity and a kernel lower bound. 
	For the complexity lower bound, we establish a relationship between the number of perfect matchings in an undirected graph and the number of witnesses.
	Since perfect matchings cannot be counted in $2^{\smallo{\power{U}}}$ assuming $\# \ETH$, we obtain a lower bound for $\witcountprob$. 
	For the kernel lower bound, we use the fact that $\perfmatch{\ell}$ does not admit a polynomial compression of a certain size. 
	For both results, it is important that the reduction yields a parameter $\adim$, the dimension in $\witcountprob$, linear in the size of the given vertex set $\power{\univ}$.
	\begin{lemma}\label{Lemma:Polyreduction}
		For any $\ell \geq 2$, there is a polynomial-time reduction from $\perfmatch{\ell}$ to the decision version of $\witcountprob$.
		Moreover, the parameter $\adim$ is linear in $\power{\univ}$.
	\end{lemma}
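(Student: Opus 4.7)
The plan is to encode an $\ell$-uniform hypergraph $\graph = (\univ, \edges)$ as an instance of $\witcountprob$ via characteristic vectors of edges. I would fix a bijection between $\univ$ and the coordinates of $\F^{\power{\univ}}$, set $\adim = \power{\univ}$, and for each edge $e \in \edges$ introduce the vector $\avec_e \in \F^{\adim}$ whose $u$-th coordinate is $1$ iff $u \in e$. I take $\setvec = \Set{\avec_e}{e \in \edges}$, let $\tvec$ be the all-ones vector, and set $\anum = \power{\univ}/\ell$; if $\ell$ does not divide $\power{\univ}$ no perfect matching exists and the reduction returns a trivial no-instance. The construction is manifestly polynomial, and by design $\adim$ is linear in $\power{\univ}$, matching the demand on the parameter.

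The key step is to verify that $\power{\witgiven} > 0$ iff $\graph$ has a perfect matching. Any witness $(\avec_{e_1}, \ldots, \avec_{e_{\anum}})$ consists of $\anum$ pairwise-distinct edges whose characteristic vectors sum to the all-ones vector over $\F$, which means that every vertex $u \in \univ$ is covered by an odd, hence positive, number of the chosen edges. A double-counting argument then closes the gap: the total number of vertex-edge incidences contributed by the chosen edges equals $\anum \cdot \ell = \power{\univ}$, which is exactly $\power{\univ}$, so the lower bound of one cover per vertex is tight and every vertex is covered exactly once. The set $\Set{e_i}{1 \le i \le \anum}$ is therefore a perfect matching. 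Conversely, any perfect matching yields $\anum!$ distinct witnesses by ordering its edges, so indeed $\power{\witgiven} = \anum! \cdot \power{\perfect{\graph}}$, which settles the decision reduction and, as a bonus, preserves counts for later use.

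The only real subtlety is ruling out witnesses in which some vertex is covered three, five, or more times. This is precisely the point where both the rigid choice $\anum = \power{\univ}/\ell$ and the $\ell$-uniformity of the hypergraph are used: dropping either constraint breaks the incidence-counting argument and allows spurious solutions. Beyond this pivotal observation, the remaining verifications (polynomial runtime, linearity of $\adim$ in $\power{\univ}$, and well-definedness of $\tvec$ and $\anum$) are routine.
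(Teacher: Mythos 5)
Your reduction is exactly the one in the paper: characteristic vectors of edges, the all-ones target, $\anum = \power{\univ}/\ell$, and $\adim = \power{\univ}$, with correctness established via the bijection (up to ordering) between witnesses and perfect matchings. If anything, your double-counting argument (total incidences $\anum\ell = \power{\univ}$, each vertex covered an odd hence positive number of times, so exactly once) spells out the disjointness step more explicitly than the paper, which merely asserts that disjointness ``follows from the choice of $\anum$.''
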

	For the construction, let $\graph = (\univ,\edges)$ be an $\ell$-uniform hypergraph.
	Assume the vertices in $\univ$ are ordered.
	We construct an instance $(\setvec, \tvec, \anum)$ of $\witcountprob$ over $\F^\adim$, where $\adim = \power{\univ}$.
	To this end, let $e \in \edges$.
	We define $\lambda(e)\in \F^\adim$ to be the bitvector representation of $e$, with $\lambda(e)(u) = 1$ if and only if $u \in e$.
	All these vectors are collected in the set $\setvec = \Set{ \lambda(e) }{ e \in \edges}$.
	Since we need to cover all vertices in $\graph$, we set $\tvec = (1, \dots, 1) \in \F^\adim$.
	The trick is to define $\anum$, the number of vectors to select, such that the corresponding edges are forced to be pairwise disjoint. 
	We set $\anum = \adim / \ell$.
	If this is no natural number, we clearly have a no-instance.
	With $\ell$ entries $1$ per vector, the only way to cover $\tvec$ by $\anum$ vectors is to avoid overlapping entries.
	The following lemma states correctness of the reduction.  
	It actually shows that the reduction is parsimonious up to a factorial.
	The factorial appears since witnesses are ordered while perfect matchings are not.
	\begin{lemma}\label{Lemma:Perfectcount}
		$\power{\perfect{\graph}} = \anum! \cdot \power{\witgiven}$.
	\end{lemma}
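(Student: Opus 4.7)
The plan is to establish a $\anum!$-to-$1$ correspondence between witnesses and perfect matchings, obtained by forgetting the order in which the $\anum$ vectors of a witness appear. The key observation is that the construction preceding the lemma makes $\lambda$ a bijection between $\edges$ and $\setvec$, so every witness $(\avec_1,\ldots,\avec_\anum)$ is, up to reordering, a set of $\anum$ edges. The proof then splits into showing that this set is always a perfect matching and that conversely every perfect matching lifts to exactly $\anum!$ witnesses, from which the stated identity follows.

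The first step is to show that a witness necessarily produces pairwise disjoint edges that cover $\univ$. Since every $\lambda(e)$ has Hamming weight exactly $\ell$, the $\anum$-tuple contributes $\anum\cdot \ell = \adim$ ones in total. Computed over $\F$, coordinate $u$ of the sum equals the parity of the number of $\avec_i$ having a $1$ at position $u$; for the sum to equal $\tvec = (1,\ldots,1)$, each of the $\adim$ coordinates must carry odd multiplicity. With only $\adim$ ones available across the tuple, each coordinate must receive exactly one $1$, so the edges $\lambda^{-1}(\avec_i)$ are pairwise disjoint and together cover $\univ$; that is, they form an element of $\perfect{\graph}$.

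The second step is to count preimages of the forgetful map. Given $\per = \{e_1,\ldots,e_\anum\}\in\perfect{\graph}$, each of the $\anum!$ orderings of $\per$ produces a valid witness: the vectors $\lambda(e_{\sigma(i)})$ are pairwise distinct (since $\lambda$ is injective on edges) and sum to $\tvec$ by disjointness and full coverage. Conversely, any two witnesses sent to the same matching must use exactly the same underlying set of vectors and so differ only by a permutation of coordinates. The forgetful map is therefore surjective with fibers of size exactly $\anum!$, which, combined with the witness-to-matching direction of the first step, yields the stated identity after the obvious rearrangement.

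I expect the parity-counting argument in the first step to be the main technical obstacle to make precise. The tight equality $\anum\cdot \ell = \adim$ enforced by the choice $\anum = \adim/\ell$ is what makes the reduction parsimonious: any slack in this bound would allow witnesses using overlapping edges, whose duplicated coordinates would cancel over $\F$, and such witnesses would not correspond to perfect matchings. Once this tightness is secured, the factor of $\anum!$ relating ordered witnesses to unordered matchings is routine bookkeeping.
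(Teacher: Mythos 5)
Your proposal is correct and follows essentially the same route as the paper's proof: perfect matchings map to witnesses via $\lambda$ with all $\anum!$ orderings giving distinct witnesses, and conversely the tightness $\anum\cdot\ell=\adim$ together with the target $\tvec=(1,\dots,1)$ forces every witness to come from pairwise disjoint, covering edges; you merely spell out the parity/counting argument that the paper compresses into ``by the choice of $\tvec$'' and ``by the choice of $\anum$''. Note that what both your argument and the paper's own proof actually establish is that the ordered witnesses are exactly the $\anum!$ orderings of perfect matchings, i.e.\ $\power{\witgiven}=\anum!\cdot\power{\perfect{\graph}}$, so the factorial in the displayed statement sits on the wrong side --- a typo in the lemma, not a defect of your proof.
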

	\begin{proof}
		We will reuse the map $\lambda$ from above.
		Note that it is a bijection between $\powerset{\univ}$ and $\F^\adim$.
		Let $\per = \set{e_1, \dots, e_\anum}$ be a perfect matching of $\graph$.
		Since $e_i \cap e_j = \emptyset$ for $i \neq j$, we have \mbox{$\lambda(e_i) + \lambda(e_j) = \lambda(e_i \cup e_j)$.}
		Hence, $\sum_{i \in [1,\anum]} \lambda(e_i) = \lambda(\bigcup_{i \in [1,\anum]} e_i) = \lambda(\univ)$.
		The vector $\lambda(\univ)$ is the target~$\tvec$, which means $(\lambda(e_1), \dots, \lambda(e_\anum))$ is a witness.
		Of course, any reordering of $(\lambda(e_1), \dots, \lambda(e_\anum))$ is a witness as well.
		Hence, any perfect matching yields the existence of $\anum!$ witnesses.
		
		Now let $\avecvec = (\avec_1, \dots, \avec_\anum) \in \setvec^\anum$ be a witness.
		Since $\avec_i \in \setvec$, the set $e_i = \lambda^{-1}(\avec_i)$ is an edge in $\graph$.
		We define $\per = \set{e_1, \dots, e_\anum} \subseteq \powerset{\univ}$. 
		This is a perfect matching.
		That all vertices are covered is by the choice of $\tvec$. 
		Disjointness of the edges follows from the choice of $\anum$.
	\end{proof}
	\subsection{Lower Bound on the Runtime}

	We prove the announced lower bound on the runtime for $\witcountprob$.
	It shows that the algorithm presented in this paper is optimal.
	The bound is based on the $\# \ETH$, introduced in~\cite{Dell2014}.
	This variant of the exponential-time hypothesis assumes that the number of satisfying assignments of a $3$-CNF formula over $n$ variables cannot be counted in time $2^{\smallo{n}}$.
	\begin{theorem}
		$\witcountprob$ does not admit an $2^{\smallo{\adim}}$-time algorithm, unless $\#\ETH$ fails.
	\end{theorem}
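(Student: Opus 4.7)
The plan is to chain reductions from $\#\ETH$ down to $\witcountprob$ through counting perfect matchings in ordinary graphs, reusing Lemma~\ref{Lemma:Polyreduction} and its parsimonious character established in Lemma~\ref{Lemma:Perfectcount}. Under $\#\ETH$, the number of satisfying assignments of a $3$-CNF formula with $n$ variables cannot be computed in $2^{\smallo{n}}$ time, and by the counting version of the Sparsification Lemma~\cite{Dell2014} this remains true even when the formula has only $\bigO{n}$ clauses. A standard parsimonious reduction from $\#$3-SAT to counting perfect matchings produces a graph on $\bigO{n+m}$ vertices; applied to the sparsified instances it shows that counting perfect matchings in an undirected graph $\graph = (\univ, \edges)$ requires $2^{\Omega(\power{\univ})}$ time under $\#\ETH$.

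With this lower bound on counting perfect matchings in hand, I instantiate Lemma~\ref{Lemma:Polyreduction} with $\ell = 2$. Given an undirected graph $\graph = (\univ, \edges)$, the lemma produces in polynomial time an instance $(\setvec, \tvec, \anum)$ of $\witcountprob$ over $\F^\adim$ with $\adim = \power{\univ}$ and $\anum = \power{\univ}/2$. Lemma~\ref{Lemma:Perfectcount} then provides the identity $\power{\perfect{\graph}} = \anum! \cdot \power{\witgiven}$, so any $2^{\smallo{\adim}}$-time algorithm for $\witcountprob$ would yield a $2^{\smallo{\power{\univ}}}$-time algorithm for counting perfect matchings, as the factor $\anum!$ is computable in polynomial time. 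This contradicts the lower bound from the previous step and hence establishes the theorem.

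The main obstacle is justifying the $2^{\Omega(\power{\univ})}$ lower bound for counting perfect matchings, where the sparsification step is the delicate point: without bounding $m$ in terms of $n$, a parsimonious reduction from $\#$3-SAT alone only gives a bound of the form $2^{\Omega(n)}$, which cannot be converted into a bound linear in the number of graph vertices. Once sparsification is in place, the remainder is direct, because Lemma~\ref{Lemma:Polyreduction} preserves the parameter up to equality, $\adim = \power{\univ}$, and Lemma~\ref{Lemma:Perfectcount} is parsimonious up to the trivially-computable factor $\anum!$. If one wishes to bypass the reliance on a bipartite reduction, an essentially identical argument goes through by instantiating Lemma~\ref{Lemma:Polyreduction} with $\ell = 3$ and reducing $\#$3-SAT directly to counting perfect matchings in $3$-uniform hypergraphs of linear size.
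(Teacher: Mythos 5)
Your high-level strategy matches the paper's: reduce from counting perfect matchings in undirected graphs ($\ell=2$), using Lemma~\ref{Lemma:Polyreduction} to produce an instance with $\adim = \power{\univ}$ and Lemma~\ref{Lemma:Perfectcount} to transfer the count back (up to the polynomial-time-computable factor $\anum!$). That part of your argument is correct and is exactly what the paper does. The difference is where the lower bound on counting perfect matchings comes from. The paper invokes it as a black box, namely Curticapean's theorem that $\#\perfmatch{2}$ cannot be solved in $2^{\smallo{\power{\univ}}}$ time under $\#\ETH$, while you attempt to re-derive it by composing counting sparsification with a ``standard parsimonious reduction'' from $\#$3-SAT to counting perfect matchings.

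That re-derivation has a genuine gap, and you have misidentified where the difficulty lies. You flag sparsification as the delicate point, but sparsification for $\#\ETH$ is known (and you correctly cite it). The real obstacle is that there is no standard parsimonious, linear-size reduction from $\#$3-SAT to counting perfect matchings in \emph{unweighted} graphs. Valiant-style reductions go through the permanent and use gadgets with non-unit (often negative or non-integer) weights; converting those to unweighted perfect-matching counts via polynomial interpolation inflates the instance size by polylogarithmic factors, which destroys a $2^{\Omega(\power{\univ})}$ lower bound. This is precisely the obstruction that Curticapean's block-interpolation technique was designed to overcome, and it is the actual content of the cited theorem. Treating it as a routine step is the error; you should either cite Curticapean's result directly, as the paper does, or supply the block-interpolation argument in full. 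Your closing suggestion to instead reduce to $3$-uniform hypergraph matching does not help here, since you would still need a tight $\#\ETH$ lower bound for that counting problem, which faces the same issue.
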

	Our theorem is based on a result shown by Curticapean in \cite{Curticapean2015}.
	There, $\# \ETH$ was used to prove the existence of a $2^{\smallo{\power{\univ}}}$-time algorithm for counting perfect matchings in undirected graphs highly unlikely.
	Since undirected graphs are $2$-uniform hypergraphs, the problem corresponds to the counting variant of $\perfmatch{2}$ which we will denote by $\# \perfmatch{2}$.
	\begin{theorem}[\cite{Curticapean2015}]
		$\# \perfmatch{2}$ cannot be solved in $2^{\smallo{\power{\univ}}}$ time, unless $\# \ETH$ fails.
	\end{theorem}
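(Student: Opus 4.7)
The plan is to exhibit a reduction from $\#3$-SAT on $n$ variables to $\#\perfmatch{2}$ producing a graph whose vertex count $\power{\univ}$ is linear in $n$, so that a hypothetical $2^{\smallo{\power{\univ}}}$-time algorithm for the latter transfers into a $2^{\smallo{n}}$-time algorithm for the former, contradicting $\#\ETH$.

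First I would invoke the counting form of the Sparsification Lemma of Impagliazzo, Paturi, and Zane. For every $\varepsilon>0$ it rewrites a $3$-CNF $\varphi$ on $n$ variables, in time $2^{\varepsilon n}$, as a sum $\#\mathrm{Sat}(\varphi) = \sum_i \#\mathrm{Sat}(\varphi_i)$ of sparse formulas $\varphi_i$ each with only $O(n)$ clauses. Hence it suffices to solve the sparse counting problem in time $2^{\smallo{n}}$, and any linear blow-up in subsequent reductions is absorbed.

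Second, I would reduce a sparse $\#3$-SAT instance to computing the permanent of a $0/1$ matrix $A$ of side $N = O(n)$, along the lines of Dell--Husfeldt--Marx--Taslaman--Wahl\'en. Each variable and each clause is replaced by a constant-size matching gadget, and occurrence edges splice variable gadgets into clause gadgets; since both the variable count and the (sparsified) clause count are $O(n)$, the matrix side $N$ stays linear in $n$, and one obtains $\mathrm{per}(A) = c \cdot \#\mathrm{Sat}(\varphi)$ for an explicitly computable factor~$c$. Finally, I would view $A$ as the biadjacency matrix of a bipartite graph $G_A$ on $\power{\univ} = 2N = O(n)$ vertices, which is in particular a $2$-uniform hypergraph satisfying $\mathrm{per}(A) = \power{\perfect{G_A}}$. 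A $2^{\smallo{\power{\univ}}}$-time algorithm for $\#\perfmatch{2}$ would then evaluate the permanent, hence $\#\mathrm{Sat}(\varphi)$, in $2^{\smallo{n}}$ time, refuting $\#\ETH$.

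The principal obstacle is controlling the size of the permanent reduction: Valiant's classical construction is count-preserving but produces a matrix whose side is polynomial rather than linear in $n$, so naively chaining it with sparsification loses the desired bound. The technical heart of the argument is ensuring that each clause gadget contributes only $O(1)$ rows and columns and that gadget interactions introduce no spurious contributions to the permanent. Once the matrix side is genuinely $O(n)$, the passage to $\#\perfmatch{2}$ via the bipartite interpretation is essentially free and only doubles the vertex count.
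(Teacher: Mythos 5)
First, note that the paper does not prove this statement at all; it is imported verbatim from Curticapean~\cite{Curticapean2015}, so the only fair comparison is with that proof, whose whole point is to overcome exactly the obstacle your sketch glosses over. Your architecture (counting sparsification, then SAT-to-permanent gadgets, then reading the $0/1$ permanent as counting perfect matchings of a bipartite graph) is the right family of ideas, and the first and last steps are fine. The gap is the middle step: there is no polynomial-time many-one reduction producing a $0/1$ matrix $A$ of any size with $\mathrm{per}(A) = c \cdot \#\mathrm{Sat}(\varphi)$ for a positive factor $c$, unless $\P = \NP$. Indeed, for a $0/1$ matrix (equivalently, an unweighted graph), deciding $\mathrm{per}(A) > 0$ is just deciding the existence of a perfect matching, which is polynomial-time solvable; a purely multiplicative relation would therefore let you decide $3$-SAT in polynomial time. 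The Valiant/Dell--Husfeldt--Marx--Taslaman--Wahl\'en gadgets you invoke are constant-size only for the \emph{weighted} permanent (they crucially use weights such as $-1$), and this is where the multiplicative relation lives; it cannot survive the passage to $\set{0,1}$ entries in the form you state.

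Eliminating the weights is the actual technical content of the cited theorem. Simulating the weights by unweighted gadgets or by univariate interpolation costs a logarithmic blow-up, which is why Dell et al.\ only obtained a $2^{\smallo{\power{\univ}/\log \power{\univ}}}$-style lower bound for the unweighted case. Curticapean's contribution is \emph{block interpolation}: the reduction becomes a Turing reduction that makes up to $2^{\varepsilon \power{\univ}}$ oracle calls, each on an \emph{unweighted} graph of linear size obtained by substituting constant-size weight gadgets, and recovers the weighted count (hence $\#\mathrm{Sat}(\varphi)$) by multivariate interpolation over blocks of variables. Any correct write-up must be of this Turing-reduction form (or an equivalent device); as written, your single-call reduction with $\mathrm{per}(A) = c\cdot\#\mathrm{Sat}(\varphi)$ cannot be completed, and the ``principal obstacle'' you name (the polynomial size of Valiant's matrix) is not the real one --- the weights are.
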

	Suppose we had an algorithm for $\witcountprob$ with runtime $2^{\smallo{\adim}}$.
	Given a graph $\graph = (\univ, \edges)$, we could apply the reduction from Lemma \ref{Lemma:Polyreduction} to get, in polynomial time, an instance $(\setvec, \tvec, \anum)$ of $\witcountprob$ with $\adim = \power{\univ}$.
	Then we could apply our algorithm for counting witnesses.
	With Lemma \ref{Lemma:Perfectcount}, this yields a $2^{\smallo{\power{\univ}}}$-time algorithm for $\# \perfmatch{2}$.
	
	\subsection{Lower Bound on the Kernel}
	
	We present the lower bound on the kernel size for the decision variant of $\witcountprob$.
	From an algorithmic point of view, this is interesting since kernels characterize the number of hard instances of a problem.
	In fact, it can be shown that a kernel for a problem exists if and only if the problem is $\FPT$~\cite{Cygan2015}.
	For many problems, the search for small kernels is ongoing research, but not all problems admit such.
	This led to an approach that tries to disprove the existence of kernels of a certain size, see \cite{Bodlaender2014, Kratsch2014, Bodlaender2009, Fortnow2011}.
	For the next theorem, we apply techniques developed in that line of work.
	\begin{theorem}\label{Theorem:Kernelsize}
		Deciding $\power{\witgiven} > 0$ does not admit a poly. kernel unless $\NP \subseteq \conpoly$.
	\end{theorem}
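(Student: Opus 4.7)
The plan is to derive the kernel lower bound via a polynomial parameter transformation (PPT), using the reduction from Lemma~\ref{Lemma:Polyreduction} together with a known non-compressibility result for $\perfmatch{\ell}$. Recall the standard machinery of Bodlaender, Jansen, and Kratsch (see~\cite{Bodlaender2014,Bodlaender2009,Fortnow2011}): if there is a polynomial-time reduction from a parameterized problem $A$ to a parameterized problem $B$ such that the new parameter is polynomially bounded in the old one, then a polynomial kernel for $B$ yields a polynomial compression for $A$. Hence showing non-compressibility of $A$ rules out a polynomial kernel for $B$.

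First, I would take as source problem $\perfmatch{\ell}$ for a fixed $\ell \geq 3$, parameterized by $\power{\univ}$. For such $\ell$, it is known that $\perfmatch{\ell}$ does not admit a polynomial compression (into any language) parameterized by $\power{\univ}$ unless $\NP \subseteq \conpoly$; this follows from cross-composition-style arguments for packing/matching problems. Second, I would invoke Lemma~\ref{Lemma:Polyreduction} applied to this $\ell$: it provides a polynomial-time reduction from $\perfmatch{\ell}$ to the decision variant of $\witcountprob$ with $\adim = \power{\univ}$, so the target parameter is linear in the source parameter. This is exactly a PPT from $\perfmatch{\ell}$, parameterized by $\power{\univ}$, to the decision version of $\witcountprob$, parameterized by $\adim$.

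Finally, for the contradiction step: assume the decision variant of $\witcountprob$ admits a polynomial kernel in $\adim$. Given any instance $\graph=(\univ,\edges)$ of $\perfmatch{\ell}$, first apply the reduction from Lemma~\ref{Lemma:Polyreduction} to obtain an equivalent instance $(\setvec,\tvec,\anum)$ of size polynomial in $\power{\edges}$ with $\adim=\power{\univ}$, then feed it to the hypothetical kernelization. The output has size polynomial in $\adim$, hence polynomial in $\power{\univ}$, and is equivalent (by Lemma~\ref{Lemma:Perfectcount}) to the original perfect matching instance. This is a polynomial compression of $\perfmatch{\ell}$ parameterized by $\power{\univ}$, contradicting the non-compressibility assumption; therefore $\NP \subseteq \conpoly$.

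The main obstacle is pinpointing and correctly invoking the right non-compressibility result for $\perfmatch{\ell}$: one must take $\ell\geq 3$ (the case $\ell=2$ being polynomial-time solvable and thus not useful here), and parameterize by the vertex count rather than by the matching size, so that the linear parameter bound from Lemma~\ref{Lemma:Polyreduction} can be chained into a compression bound of the correct form. Once that result is in place, the PPT-based composition is routine.
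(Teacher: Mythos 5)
There is a genuine gap in the step where you invoke the non-compressibility result. You claim that for fixed $\ell \geq 3$, $\perfmatch{\ell}$ parameterized by $\power{\univ}$ ``does not admit a polynomial compression (into any language)'' unless $\NP \subseteq \conpoly$. That statement is false: the problem \emph{trivially} admits a polynomial compression (even a polynomial kernel) of size $\bigO{\power{\univ}^{\ell}}$, since an $\ell$-uniform hypergraph on $\power{\univ}$ vertices has at most $\binom{\power{\univ}}{\ell} = \bigO{\power{\univ}^{\ell}}$ edges and can be encoded by a bit per potential edge. The actual theorem (Dell--Marx, cited in the paper as Theorem~\ref{Theorem:KernelPerfMatch}) is a \emph{sharp} bound: for every $\varepsilon>0$ and every fixed $\ell\geq 3$, there is no compression of size $\bigO{\power{\univ}^{\ell-\varepsilon}}$ unless $\NP \subseteq \conpoly$. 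It rules out beating the trivial exponent, not the existence of a polynomial compression altogether.

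Because of this, your contradiction does not go through as written. If you fix $\ell$ first and then assume a polynomial kernel of size $\bigO{\adim^{c}}$ for the decision version of $\witcountprob$, composing with Lemma~\ref{Lemma:Polyreduction} only yields a compression of $\perfmatch{\ell}$ of size $\bigO{\power{\univ}^{c}}$. If it happens that $c \geq \ell$, this contradicts nothing. The missing idea is to choose the uniformity \emph{after} fixing the hypothetical kernel: assume a polynomial kernel with exponent $c$, set $\ell := c+1$, run the chain on $\perfmatch{(c+1)}$, and observe that the resulting compression of size $\bigO{\power{\univ}^{c}}$ falls strictly below the threshold $\bigO{\power{\univ}^{(c+1)-\varepsilon}}$ of Theorem~\ref{Theorem:KernelPerfMatch}, giving the contradiction. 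This quantifier order (kernel exponent first, then uniformity) is exactly what the paper's proof does and is essential; with it, your PPT-composition argument becomes correct and is essentially the paper's argument.
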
 
	A key tool in the search for kernel lower bounds are polynomial compressions.
	Let $Q$ be a parameterized and $L$ be any unparameterized problem.
	A \emph{polynomial compression} of $Q$ into $L$ is an algorithm that takes an instance $(x,k)$ of $Q$, runs in polynomial time in $x$ and $k$, and returns an instance $y$ of $L$ such that:
	(1) $y \in L$ if and only if $(x,k) \in Q$, and
	(2) $\power{y} \leq p(k)$ for a polynomial~$p$.
	The polynomial $p$ is referred to as the \emph{size} of the compression.
	
	A \emph{kernelization} is similar to a compression with two differences:
	it maps to itself and is more relaxed on the size.
	The former means that instances of $Q$ always get mapped to instances of $Q$. 
	The latter means that the size $p$ is not restricted to be a polynomial.
	We allow for any computable function which depends only on the parameter $k$.
	In the special case where $p$ is a polynomial, we call the kernelization a \emph{polynomial kernel}.
	If a polynomial compression or a kernelization exists, we say that $Q$ \emph{admits} a polynomial compression/kernelization.
	
	The proof of Theorem \ref{Theorem:Kernelsize} uses a lower bound on polynomial compressions for $\perfmatch{\ell}$.
	We combine the result with the reduction from Lemma \ref{Lemma:Polyreduction} and derive the wanted kernel lower bound for the decision variant of $\witcountprob$.
	\begin{theorem}[\cite{Dell2012,Cygan2015}]\label{Theorem:KernelPerfMatch}
		Let $\varepsilon > 0$.
		For any $\ell \geq 3$, $\perfmatch{\ell}$ parameterized by $\power{\univ}$ does not admit a polynomial compression of size 
		$\bigO{\power{\univ}^{\ell - \varepsilon}}$, unless $\NP \subseteq \conpoly$ .
	\end{theorem}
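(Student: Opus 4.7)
The plan is to prove the theorem via a weak $\ell$-cross-composition into $\perfmatch{\ell}$ from an appropriately chosen $\NP$-hard source problem, and then invoke the standard framework of Bodlaender--Jansen--Kratsch and Dell--van Melkebeek: a weak $d$-cross-composition of an $\NP$-hard language $L$ into a parameterized problem $Q$ precludes polynomial compressions of $Q$ of size $\bigO{k^{d-\varepsilon}}$ unless $\NP \subseteq \conpoly$. Thus everything reduces to exhibiting a composition with $d = \ell$ and parameter $\power{\univ}$.

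First I would fix a convenient source language. A natural candidate is $\ell$-dimensional matching in its balanced form (equivalently, exact cover by $\ell$-element subsets), which is $\NP$-complete for every $\ell \geq 3$ and is a native $\ell$-uniform hypergraph problem; alternatively one can start from an OR of $t$ instances of CNF-SAT and route it through an intermediate padded variant. The point is to obtain a source whose single-instance size $s$ is polynomially bounded and whose combinatorics already live on an $\ell$-uniform hypergraph, so that the only real work is the composition.

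Next I would design the composition itself. Given $t$ source instances $I_1, \dots, I_t$, each of bit-size at most $s$, the goal is to produce a single $\ell$-uniform hypergraph $\graph = (\univ, \edges)$ of vertex count $\power{\univ} = \bigO{t^{1/\ell} \cdot s^{\bigO{1}}}$ such that $\graph$ has a perfect matching iff at least one $I_j$ does. The combinatorial heart is a pigeonhole observation: an $\ell$-uniform hypergraph on $n$ vertices carries room for $\Theta(n^\ell)$ edges, which is precisely the budget needed to absorb $t \approx n^\ell / s^{\bigO{1}}$ inputs. I would realize $\univ$ as a disjoint union of $\ell$ index coordinates of size $\Theta(t^{1/\ell})$, together with instance-local gadget vertices of size $\text{poly}(s)$. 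A coordinator gadget forces the matching to commit to one $\ell$-tuple of coordinates, which identifies a unique index $j$; the edges attached to the local gadget of $I_j$ then encode that instance's native matching structure, so a global perfect matching completes iff $I_j$ is a yes-instance. Verifying the two requirements of a weak $\ell$-cross-composition---polynomial running time and the OR-correctness property---is then a routine check, and the generic theorem yields the claimed $\bigO{\power{\univ}^{\ell - \varepsilon}}$ compression lower bound.

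The hard part will be the coordinator gadget. We must simultaneously pack $t$ instances into a universe of size only $t^{1/\ell} \cdot \text{poly}(s)$, enforce that exactly one instance becomes "active" in the matching, and prevent crosstalk matchings that splice edges from several instances together to forge a spurious solution. Doing this inside a strictly $\ell$-uniform structure, without inflating the vertex count beyond the $\ell$-th-root budget, is delicate: each selection vertex used in the coordinator costs an $\ell$-uniform edge, and any auxiliary vertices added for local gadgetry must be covered by the matching without leaking information across instances. The exact choice of source problem and the layout of its $\ell$ coordinates is precisely the place where the optimal exponent $\ell$ (rather than something weaker like $\ell - 1$) is bought; everything else is an application of the standard machinery.
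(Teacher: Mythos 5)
First, note that the paper does not prove this statement at all: Theorem~\ref{Theorem:KernelPerfMatch} is imported verbatim from the cited literature (Dell--Marx/Dell--van Melkebeek and the book of Cygan et al.), so there is no in-paper argument to compare against; your sketch has to stand on its own as a reconstruction of that known proof. At the level of framework you are on the right track --- the known proof is indeed a weak cross-composition of cost $\bigO{t^{1/\ell}\cdot\mathrm{poly}(s)}$ into $\perfmatch{\ell}$, combined with the Dell--van Melkebeek/Bodlaender--Jansen--Kratsch machinery to conclude that no compression of size $\bigO{\power{\univ}^{\ell-\varepsilon}}$ exists unless $\NP \subseteq \conpoly$.

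However, the proposal has a genuine gap, and you partly flag it yourself: the entire content of the theorem is the composition, and you do not construct it. Worse, the sketch as written cannot work quantitatively. You propose ``instance-local gadget vertices of size $\mathrm{poly}(s)$'' attached to each input $I_j$; since a perfect matching must cover \emph{every} vertex, each of the $t$ instances would need its own such vertices, giving $\power{\univ} = \Omega(t)$ rather than the required $\bigO{t^{1/\ell}\cdot\mathrm{poly}(s)}$, which destroys the $\ell$-th-root budget and yields no lower bound beyond exponent $1$. In the actual Dell--Marx construction no per-instance vertices exist at all: the universe consists only of shared coordinate-style vertices, the index $j$ and the internal structure of $I_j$ are encoded purely in \emph{which} $\ell$-subsets of these shared vertices appear as edges (the edge set may be large, the vertex set may not), and the exact-cover nature of a perfect matching is what rules out crosstalk between instances --- establishing this requires nontrivial combinatorial gadgets, not a routine check. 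So what you label ``the hard part'' is not a deferred detail but the theorem itself, and the specific architecture you sketch (coordinator gadget plus per-instance local gadgets) would have to be replaced, not merely elaborated, to make the argument go through.
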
 
	For the proof of Theorem \ref{Theorem:Kernelsize}, assume there is a polynomial kernel for deciding $\power{\witgiven} > 0$.
	This means there is an algorithm $\alg$ that takes an instance $I = (\setvec, \tvec, \anum)$ over $\F^\adim$ and maps it to an instance $I' = (\setvec',\tvec',\anum')$.
	The algorithm $\alg$ runs in polynomial time in $\power{I}$ and~$\adim$, and the size of $I'$ is bounded by a polynomial in $\adim$:
	$\power{I'} = \bigO{\adim^{\ell}}$, for a constant $\ell$.
	
	To derive a contradiction, consider the problem $\perfmatch{(\ell+1)}$.
	Let $\graph$ be an input to the problem.
	By Lemma \ref{Lemma:Polyreduction}, we get, in polynomial time, an instance $I = (\setvec, \tvec, \anum)$ of the decision variant of $\witcountprob$ with parameter $\adim = \power{\univ}$.
	If we apply algorithm $\alg$ to the instance, we get an instance $I' = (\setvec',\tvec',\anum')$.
	As mentioned above, the size of the instance is bounded by $\bigO{\adim^\ell}$.
	
	Putting things together, we get a polynomial compression for $\perfmatch{(\ell+1)}$ of size $\bigO{\adim^\ell} = \bigO{\power{\univ}^\ell}$.
	But this contradicts Theorem \ref{Theorem:KernelPerfMatch} and concludes the proof.
	

	\section{Conclusion}
\label{Section:Conclusion}

	We studied the witness-counting problem: 
	given a set of vectors $\setvec$ in $\F^\adim$, a target vector $\tvec \in \F^\adim$, and an integer $\anum \in \N$, the task is to count all ways in which $\anum$ different vectors from $\setvec$ can be summed-up to the target vector $\tvec$.
	The problem generalizes fundamental questions from coding theory and has applications in hardware monitoring.
	
	Our contribution is an efficient algorithm that runs in time  $\bigOS{2^{\adim}}$. 
	Crucially, it only has a logarithmic dependence on the  number of vectors in $V$.
	On a high-level, the algorithm can be understood as a convolution the precision of which is improved by means of inclusion-exclusion-like correction terms --- an approach that may have applications beyond this paper.
	The algorithm as it is will generalize to vectors over $\mathbb{F}_4$ but beyond will face rounding errors.
	
	We also showed optimality: there is no algorithm solving the problem in time $2^{\smallo{\adim}}$ unless $\#\ETH $ fails. 
	Furthermore, the problem of checking the existence of a witness does not admit a polynomial kernel, unless $\NP \subseteq \conpoly$.
	Both lower bounds rely on a reduction from $\perfmatch{\ell}$, the problem of finding a perfect matching in an $\ell$-uniform hypergraph.

	\bibliographystyle{plain}
	\bibliography{content/cite}

\end{document}